\newtheorem{remark}{Remark}
\newtheorem{definition}{Definition}
\newtheorem{theorem}{Theorem}
\newtheorem{problem}{Problem}
\newtheorem{proposition}{Proposition}
\newtheorem{lemma}{Lemma}
\newtheorem{example}{Example}
\newtheorem{corollary}{Corollary}
\newtheorem{assumption}{Assumption}
\begin{document}
\begin{frontmatter}

\title{Refined Barrier Conditions for Finite-Time Safety and Reach-Avoid Guarantees in Stochastic Systems} 

\author{Bai Xue$^1$, C.-H. Luke Ong$^2$, Dominik Wagner$^2$, and Peixin Wang$^3$}
\address{1. KLSS, ISCAS, Beijing, China; University of Chinese Academy of Sciences, Beijing, China. Email: xuebai@ios.ac.cn}
\address{2. College of Computing and Data Science, Nanyang Technological University, Singapore. Email: \{luke.ong,dominik.wagner\}@ntu.edu.sg}
\address{3. Software Engineering Institute, East China Normal University, China. Email: pxwang@sei.ecnu.edu.cn}



\begin{abstract}                
Providing finite-time probabilistic safety and reach-avoid guarantees is crucial for safety-critical stochastic systems. Existing state-of-the-art barrier  methods often rely on a restrictive boundedness assumption for auxiliary functions, limiting their applicability. This paper presents refined barrier conditions that remove this assumption. Specifically, we establish conditions for deriving upper bounds on finite-time safety probabilities in discrete-time systems and lower bounds on finite-time reach-avoid probabilities in continuous-time systems. This relaxation expands the class of verifiable systems, especially those with unbounded state spaces, and facilitates the use of advanced optimization techniques, such as semi-definite programming with polynomial functions. Numerical examples demonstrate the effectiveness of the approach.
\end{abstract}

\begin{keyword}
Stochastic Systems, Safety/Reach-avoid Probabilities, Barrier Conditions
\end{keyword}

\end{frontmatter}

\section{Introduction}
\label{sec:intro} 

Many real-world systems—such as autonomous vehicles and robotics—operate under strict time constraints \citep{alur2015principles,baier2008principles}. This leads to two key problems: finite-time safety (staying within a safe set) and reach-avoid verification (reaching a target while avoiding unsafe states) over a fixed horizon, both central to time-critical applications.

However, many systems are stochastic due to noise, disturbances, and interactions \citep{kumar2015stochastic}, making deterministic guarantees impractical. This motivates probabilistic notions such as finite-time safety and reach-avoid probabilities \citep{prajna2007framework,steinhardt2012finite,santoyo2021barrier,jagtap2018temporal,zhi2024unifying,chen2025construction,xue2024finite,xue2023new}, which quantify staying safe or reaching a target within a horizon. Since exact computation is intractable, prior work focuses on lower and upper bounds. Lower bounds provide conservative safety guarantees, while upper bounds capture performance limits and prevent overly optimistic assessments. Together, they define a probabilistic interval that supports risk-aware analysis and control design.

In this work, we propose new barrier-like conditions to bound finite-time safety and reach-avoid probabilities in both discrete- and continuous-time systems. For discrete time, we derive upper bounds on safety probabilities; for continuous time, lower bounds on reach-avoid probabilities. These results remove the restrictive bounded-function assumption and extend existing state-of-the-art methods. Our discrete-time conditions are inspired by the dynamic programming framework in \citep{chen2025construction}: reversing the inequalities yields valid upper-bound conditions, a technique also used in \citep{yu2023safe}. For continuous time, we introduce a new condition that avoids the boundedness requirement when lower-bounding reach-avoid probabilities. Numerical examples using semidefinite programming (SDP) demonstrate effectiveness.

The contributions of this work are as follows. We introduce new barrier-like conditions that bound finite-time safety and reach-avoid probabilities: upper bounds on safety probabilities for discrete-time systems and lower bounds on reach-avoid probabilities for continuous-time systems. By relaxing the  bounded-function requirement, our conditions extend verification to settings where such bounded functions do not exist. This relaxation enables efficient computational synthesis, e.g., SDP for polynomial templates, because polynomials (degree $\geq 1$) are unbounded on unbounded domains and were previously excluded by bounded-function assumptions. Our framework therefore permits SDP-based construction of unbounded polynomial barrier-like functions, making verification practical for a substantially broader class of systems.


\textbf{Related Work}.
Barrier certificates were introduced for deterministic safety and reachability verification \citep{prajna2004safety,prajna2007convex}, and later extended to stochastic systems for infinite-horizon analysis using Ville’s inequality and equation relaxations \citep{prajna2007framework,vzikelic2023learning,zhi2024unifying,xue2021reach,xue2022reach,yu2023safe,xue2024sufficient,chen2025construction}. Extensions also cover probabilistic programs and $\omega$-regular properties \citep{chakarov2013probabilistic,mciver2017new,kenyon2021supermartingales,abate2024stochastic,wang2025verifying}. Since this work considers finite-time verification, we next focus on bounded-horizon settings.


\subsubsection{Discrete-time Systems}
$c$-martingale barrier methods are widely used for stochastic discrete-time verification \citep{kushner1967,steinhardt2012finite}. They typically provide lower bounds on safety probabilities but not upper bounds. Extensions include temporal logic, controller synthesis, and neural $c$-martingales \citep{jagtap2018temporal,jagtap2020formal,santoyo2021barrier,mathiesen2022safety}, as well as results for difference inclusions and probabilistic programs \citep{ghanbarpour2025characterization,kura2019tail,chatterjee2024quantitative}. Recent work studies both bounds under strong invariance assumptions \citep{zhi2024unifying}, while \citep{xue2024finite} requires bounded barrier functions for upper bounds. In contrast, we remove this restriction and derive upper-bounding conditions.

\subsubsection{Continuous-time Systems}
Inspired by Kushner’s work \citep{kushner1967}, Steinhardt et al. \citep{steinhardt2012finite} extended barrier certificates to finite-horizon safety verification via $c$-martingales. Later refinements introduced state-dependent bounds \citep{santoyo2021barrier} and controlled variants \citep{wang2021safety}. These methods typically use Ville’s inequality to obtain lower bounds on safety probabilities. Another line of work \citep{xue2021reach,xue2022reach,yu2023safe} derives lower and upper bounds for reach-avoid probabilities via relaxed equations, with \citep{xue2023new} extending this to finite horizons. However, as in \citep{xue2024finite}, some finite-horizon conditions require bounded barrier functions, limiting common choices such as polynomials. We remove this boundedness requirement and propose more general barrier-like conditions.

\section{Preliminaries}  
\label{sec:pre}  
We introduce stochastic discrete- and continuous-time systems, the finite-time safety and reach-avoid problems, and related barrier-like conditions. We use the following notation: $\mathbb{R}$ is the set of real numbers; $\mathbb{N}$ the set of nonnegative integers; $\mathbb{N}_{\le k} := \{0,1,\dots,k\}$; $\Delta_1 \setminus \Delta_2$ denotes set difference, $\overline{\Delta_1}$ the closure of $\Delta_1$, and $\partial \Delta_1$ its boundary; and $1_A(\bm{x})$ is the indicator function of a set $A$.

\subsection{Problem Statement}

\subsubsection{Discrete-time Systems}

We consider stochastic discrete-time systems described by  
\begin{equation}
\label{system}
\bm{x}(l+1) = \bm{f}(\bm{x}(l),\bm{\theta}(l)), \quad \forall l\in \mathbb{N},
\end{equation}  
where \(\bm{x}(l)\in \mathbb{R}^n\) is the system state at time \(l\) and \(\bm{\theta}(l)\in \Theta\subseteq \mathbb{R}^m\) represents stochastic disturbances. The disturbance sequence \(\{\bm{\theta}(l)\}_{l\in \mathbb{N}}\) consists of i.i.d. (independent and identically distributed) random vectors defined on a probability space \((\Theta,\mathcal{F},\mathbb{P})\), with \(\bm{\theta}(l)\sim \mathbb{P}\) for all \(l\in \mathbb{N}\). The expectation \(\mathbb{E}[\cdot]\) is taken with respect to \(\mathbb{P}\).  

To define system trajectories, we first introduce the notion of a disturbance signal.  

\begin{definition}[Disturbance Signal]  
A \emph{disturbance signal} \(\pi\) is a sample path of the stochastic process \(\{\bm{\theta}(i) \colon \Theta \rightarrow \Theta, i \in \mathbb{N}\}\), defined on the canonical sample space \(\Theta^{\infty}\) with the product topology \(\mathcal{B}(\Theta^{\infty})\) and probability measure \(\mathbb{P}^{\infty}\). The associated expectation is denoted \(\mathbb{E}^{\infty}[\cdot]\).  
\end{definition}  

\begin{definition}[Trajectory]  
Given an initial state \(\bm{x}_0 \in \mathbb{R}^n\) and a disturbance signal \(\pi\), the trajectory of system \eqref{system} is the sequence  \(\bm{\phi}_{\pi}^{\bm{x}_0} \colon \mathbb{N} \rightarrow \mathbb{R}^n\) defined as  
\[
\bm{\phi}_{\pi}^{\bm{x}_0}(0) = \bm{x}_0,  \quad   \bm{\phi}_{\pi}^{\bm{x}_0}(l+1) = \bm{f}(\bm{\phi}_{\pi}^{\bm{x}_0}(l), \bm{\theta}(l)), \quad \forall l\in \mathbb{N}.
\]  
\end{definition}  

This paper addresses the verification problem of \emph{finite-time safety} over a horizon $\mathbb{N}_{\leq N}$, which seeks to upper bound the probability that the system remains within a designated safe set $\mathcal{X}$ throughout the horizon.

\begin{problem}
\label{safe}  
(Upper Bounds on Finite-Time Safety Probabilities) Given a finite time horizon $\mathbb{N}_{\leq N}$, a safe set $\mathcal{X}$, and an initial state $\bm{x}_0 \in \mathcal{X}$, determine $\epsilon_1 \in [0,1]$ such that $\mathbb{P}^{\infty}\Big(\forall k \in \mathbb{N}_{\leq N}. \bm{\phi}_{\pi}^{\bm{x}_0}(k) \in \mathcal{X} \Big) \leq \epsilon_1$. 
\end{problem}


\subsubsection{Continuous-time Systems}
We consider the continuous-time stochastic system, 
\begin{equation}
    \label{SDE}
    \begin{split}
d\bm{x}(t,\bm{w})=\bm{b}(\bm{x}(t,\bm{w}))dt+\bm{\sigma}(\bm{x}(t,\bm{w})) d\bm{W}(t,\bm{w}),
\end{split}
\end{equation}
where  $\bm{W}\colon \mathbb{R}\times \Omega\rightarrow \mathbb{R}^k$ is a standard $k$-dimensional Wiener process, and $\Omega$, equipped with the probability measure $\mathbb{P}$,  is the sample space $\bm{w}$ belongs to.  The expectation with respect to $\mathbb{P}$ is denoted by $\mathbb{E}[\,\cdot\,]$.

Besides, we have the following assumptions on 
$\bm{b}\colon \mathbb{R}^n\rightarrow \mathbb{R}^n$ and $\bm{\sigma}\colon \mathbb{R}^n \rightarrow \mathbb{R}^{n\times k}$.

\begin{assumption}[Local Lipschitz and Linear Growth]
\label{assump:sde_conditions}
The drift $\bm{b}:\mathbb{R}^n \to \mathbb{R}^n$ and diffusion $\bm{\sigma}:\mathbb{R}^n \to \mathbb{R}^{n \times k}$ of the SDE \eqref{SDE} satisfy:

\begin{enumerate}
    \item \textbf{Local Lipschitz continuity:} for any compact set $K \subset \mathbb{R}^n$, there exists $L_K>0$ such that
    \[
    \|\bm{b}(\bm{x}) - \bm{b}(\bm{y})\| + \|\bm{\sigma}(\bm{x}) - \bm{\sigma}(\bm{y})\|_F \le L_K \|\bm{x} - \bm{y}\|
    \]
    for any $\bm{x}, \bm{y} \in K$, where $\|\cdot\|_F$ denotes the Frobenius norm.
    
    \item \textbf{Linear growth:} there exists $C>0$ such that
    \[
    \|\bm{b}(\bm{x})\|^2 + \|\bm{\sigma}(\bm{x})\|_F^2 \le C (1 + \|\bm{x}\|^2)
    \]
    for any $\bm{x} \in \mathbb{R}^n$.
\end{enumerate}

Here, $\|\cdot\|$ denotes the Euclidean norm and $\|\cdot\|_F$ denotes the Frobenius norm.
\end{assumption}

Given an initial state $\bm{x}_0$, the SDE \eqref{SDE} has a unique strong solution over the time interval $[0,\infty)$. This solution is denoted as $\bm{X}_{\bm{x}_0}^{\bm{w}}\colon [0,\infty)\rightarrow \mathbb{R}^n$, which satisfies 
$\bm{X}_{\bm{x}_0}^{\bm{w}}(t)=\bm{x}_0+\int_{0}^t \bm{b}(\bm{X}_{\bm{x}_0}^{\bm{w}}(\tau))d \tau+\int_{0}^t \bm{\sigma}(\bm{X}_{\bm{x}_0}^{\bm{w}}(\tau)) d\bm{W}(\tau,\bm{w})$.

Let $v(t,\bm{x})$ be twice continuously differentiable over $\bm{x}$ and continuously differentiable over $t$, the infinitesimal generator underlying system \eqref{SDE} on the function $v(t,\bm{x})$, which represents the limit of the expected value of $v(t,\bm{X}_{\bm{x}_0}^{\bm{w}}(t))$ as $t$ approaches 0, is presented in Definition \ref{inf_generator}.
\begin{definition}
\label{inf_generator}
Given system \eqref{SDE},  the infinitesimal generator is the operator $\mathcal{L}$, which is defined to act on suitable functions $v: \mathbb{R}\times \mathbb{R}^n\rightarrow \mathbb{R}$ by
\begin{equation}
\label{infi}
\small
    \begin{split}
        \mathcal{L}v(t,\bm{x})=\lim_{\Delta t\rightarrow 0^+}\frac{\mathbb{E}[v(t+\Delta t,\bm{X}_{\bm{x}}^{\bm{w}}(t+\Delta t))]-v(t,\bm{x})}{\Delta t}.
    \end{split}
\end{equation}
The domain of $\mathcal{L}$ is by definition the set of functions $v$ for which the limit \eqref{infi} exists for all $\bm{x}\in \mathbb{R}^n$ and $t\in \mathbb{R}$. 
\end{definition}

The following proposition gives the formula for the infinitesimal generator $\mathcal{L}$. 

\begin{proposition}\citep{oksendal2013stochastic}
\label{prop:inf_generator}
Consider the system \eqref{SDE} under Assumption \ref{assump:sde_conditions}. The infinitesimal generator $\mathcal{L}$ acting on a function $v(t,\bm{x})$ is given by
$\mathcal{L}v(t,\bm{x}) = \frac{\partial v(t,\bm{x})}{\partial t}+ \nabla_{\bm{x}} v(t,\bm{x}) \cdot \bm{b}(\bm{x})+ \frac{1}{2} \mathrm{tr}\Big( \bm{\sigma}(\bm{x})^\top \bm{H}(v(t,\bm{x})) \bm{\sigma}(\bm{x}) \Big)$,
where $\nabla_{\bm{x}} v$ denotes the gradient and $\bm{H}(v)$ the Hessian of $v$ with respect to $\bm{x}$.
The domain of $\mathcal{L}$ includes all functions that are twice continuously differentiable over $\bm{x}\in \mathbb{R}^n$ and continuously differentiable over $t\in [0,T]$.

Furthermore, 
the process $\mathcal{L}v(t, \bm{X}_{\bm{x}_0}^{\bm{w}}(t))$ is integrable on $[0,T]$ and Dynkin's formula holds:
\begin{equation*}
\mathbb{E}[v(\tau, \bm{X}_{\bm{x}_0}^{\bm{w}}(\tau))] = v(0, \bm{x}_0) + \mathbb{E}\left[\int_0^\tau \mathcal{L}v(s, \bm{X}_{\bm{x}_0}^{\bm{w}}(s))  ds\right],
\end{equation*}
for stopping times $\tau \leq T$ a.s. \qed
\end{proposition}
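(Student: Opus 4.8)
The plan is to derive both the generator formula and Dynkin's formula from It\^{o}'s formula, combined with a localization argument that uses the linear-growth condition of Assumption \ref{assump:sde_conditions} to control the (a priori unbounded) function $v$ along the solution.

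First I would apply the multidimensional It\^{o} formula to the composite process $v(t, \bm{X}(t))$. Since $v$ is $C^{1}$ in $t$ and $C^{2}$ in $\bm{x}$, and $d\bm{X} = \bm{b}\,dt + \bm{\sigma}\,d\bm{W}$, It\^{o}'s formula expands $dv(t,\bm{X}(t))$ into a finite-variation (drift) part and a stochastic-integral part. Substituting the quadratic covariation $d\langle X_i, X_j\rangle = (\bm{\sigma}\bm{\sigma}^\top)_{ij}\,dt$ and collecting the $dt$ terms yields exactly
\[
\frac{\partial v}{\partial t} + \nabla_{\bm{x}} v\cdot \bm{b} + \tfrac12\,\mathrm{tr}\big(\bm{\sigma}^\top \bm{H}(v)\bm{\sigma}\big),
\]
which I would identify with the limit in Definition \ref{inf_generator} by taking the conditional expectation over $[t, t+\Delta t]$ (where the stochastic-integral part has zero mean) and dividing by $\Delta t$. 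This establishes the first claim.

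For the second claim, the integrated form reads $v(\tau, \bm{X}(\tau)) = v(0,\bm{x}_0) + \int_0^\tau \mathcal{L}v\,ds + \int_0^\tau \nabla_{\bm{x}} v\cdot \bm{\sigma}\,d\bm{W}$, so Dynkin's formula follows once the stochastic integral is a true martingale and hence has zero expectation at $\tau \le T$. The key obstacle, and the reason the usual boundedness hypothesis is normally invoked, is that $v$, $\nabla_{\bm{x}}v$, and $\bm{H}(v)$ need not be bounded, so a priori the integral is only a \emph{local} martingale. To resolve this I would localize with the exit times $\tau_R := \inf\{t : \|\bm{X}(t)\| \ge R\}$: on $[0, \tau\wedge\tau_R]$ the solution stays in the compact ball of radius $R$, where $v$ and its derivatives are bounded and the stopped stochastic integral is a genuine martingale, so Dynkin's identity holds for the stopping time $\tau\wedge\tau_R$.

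The final step is to pass to the limit $R\to\infty$, and this is where Assumption \ref{assump:sde_conditions} does the essential work. Under local Lipschitz continuity and linear growth, a standard Gronwall-plus-Burkholder--Davis--Gundy estimate gives moment bounds $\mathbb{E}\big[\sup_{t\le T}\|\bm{X}(t)\|^{p}\big] < \infty$ for every $p\ge 1$; together with the (polynomial) growth of $v$ used in the applications, these bounds dominate the integrands uniformly in $R$ and force $\tau_R\to\infty$ almost surely. Dominated convergence then yields integrability of $\mathcal{L}v(t,\bm{X}(t))$ on $[0,T]$ and lets me remove the localization to recover Dynkin's formula for every stopping time $\tau\le T$. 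I expect this limit passage, namely certifying the uniform integrability of the unbounded integrands from the moment estimates, to be the main technical point, precisely because it is what replaces the boundedness assumption.
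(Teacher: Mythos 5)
The paper does not actually prove this proposition: it is imported verbatim as a known result from \cite{oksendal2013stochastic}, so there is no in-paper argument to compare against. Your route---It\^{o}'s formula to read off the drift term, identification with the limit in Definition~\ref{inf_generator}, localization by exit times $\tau_R$, and moment estimates from the linear-growth condition to remove the localization---is the standard textbook proof, so in approach you are reconstructing the cited reference rather than diverging from anything the paper does.

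There is, however, a genuine gap at exactly the step you flag as the main technical point. Your dominated-convergence passage invokes ``the (polynomial) growth of $v$ used in the applications,'' but Proposition~\ref{prop:inf_generator} quantifies over \emph{all} functions that are $C^2$ in $\bm{x}$ and $C^1$ in $t$, with no growth hypothesis; importing a growth bound on $v$ from how the proposition will later be used is circular relative to the claim being proved. Moreover, the hypothesis cannot be dispensed with: take $\bm{b}\equiv 0$, $\bm{\sigma}\equiv I$ (which satisfy Assumption~\ref{assump:sde_conditions}) and $v(\bm{x})=e^{\|\bm{x}\|^4}$. Then $\mathbb{E}[v(\bm{X}(t))]=\infty$ for every $t>0$, so the limit defining $\mathcal{L}v$ does not exist, neither side of Dynkin's formula is finite, and no localization can rescue the argument---the moment bounds $\mathbb{E}\bigl[\sup_{t\le T}\|\bm{X}(t)\|^{p}\bigr]<\infty$ dominate only integrands of polynomial growth. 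So your proof is correct precisely for the subclass of $C^{1,2}$ functions whose values, gradients, and Hessians grow at most polynomially (which is the form in which \cite{oksendal2013stochastic} states the result, via $C_0^2$ functions or exit times from bounded sets), but it does not, and cannot, establish the proposition in the generality written here. The clean fix is to add a polynomial-growth hypothesis on $v$, $\nabla_{\bm{x}}v$, and $\bm{H}(v)$ to the statement---harmless for this paper, whose certificates are polynomials---after which your localization argument goes through as written. A smaller point: your identification of the It\^{o} drift with the limit in Definition~\ref{inf_generator} also tacitly assumes the stochastic integral has zero mean, which needs the same localization; that is repairable by the same device, but it should be said.
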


We consider a verification problem over a finite time horizon $[0,T]$, with $T>0$. Given an open state set $\mathcal{X} \subseteq \mathbb{R}^n$ (\textbf{which may be unbounded}), the problem concerns \emph{finite-time reach-avoid} tasks: it aims to lower-bound the reach-avoid probability that the system reaches a closed target set $\mathcal{X}_r \subset \mathcal{X}$ while avoiding unsafe states $\mathbb{R}^n \setminus \mathcal{X}$, starting from an initial state in $\mathcal{X} \setminus \mathcal{X}_r$.

 \begin{problem}
\label{un_proII}
(Lower Bounds on Finite-time Reach-avoid Probabilities)Given a finite time horizon \([0,T]\), an open safe set \(\mathcal{X}\), a closed target set \(\mathcal{X}_r \subseteq \mathcal{X}\), and an initial state \(\bm{x}_0 \in \mathcal{X}\setminus \mathcal{X}_r\), determine \(\epsilon_2 \in [0,1]\) such that $\mathbb{P}_{\bm{x}_0}^{[0,T]}\geq \epsilon_2$, 
where $\mathbb{P}_{\bm{x}_0}^{[0,T]}=\mathbb{P}\left( \left\{\bm{w}\in \Omega \,\middle|\ 
\begin{aligned} &\exists t\in [0,T]. \bm{X}_{\bm{x}_0}^{\bm{w}}(t) \in \partial \mathcal{X}_r\\
&\wedge\forall \tau \in [0,t]. \bm{X}_{\bm{x}_0}^{\bm{w}}(\tau) \in \mathcal{X}  \end{aligned} 
\right\}
\right)$.

\end{problem}

This paper will propose new \emph{barrier-like conditions} for addressing Problems~\ref{safe}-\ref{un_proII}.




\subsection{Barrier-Like Conditions}  
\label{sub:dpf}  
We review existing state-of-the-art barrier-like conditions that provide upper bounds for safety probabilities in discrete-time systems and lower bounds for reach-avoid probabilities in continuous-time systems.

\subsubsection{2.2.1 Barrier-Like Conditions for Discrete-time Systems}
\begin{proposition}[Theorem 2, \citep{xue2024finite}] 
\label{pro_e7}  
Suppose \(\widetilde{\mathcal{X}}\subseteq \mathbb{R}^n\) and there exists \(v \colon \widetilde{\mathcal{X}} \to \mathbb{R}\) with \(\sup_{\bm{x}\in \widetilde{\mathcal{X}}} v(\bm{x}) \leq M\), \(\alpha \in [1,\infty)\), and \(\beta \in (1-\alpha,\infty)\) satisfying  
\begin{equation}
\label{pro_e71}
\begin{cases}
\beta + \alpha v(\bm{x}) \leq \mathbb{E}^{\infty}[v(\bm{\phi}_{\pi}^{\bm{x}}(1))], & \forall \bm{x} \in \mathcal{X},\\
v(\bm{x}) \leq 1, & \forall \bm{x} \in \widetilde{\mathcal{X}}\setminus \mathcal{X},
\end{cases}
\end{equation} 
where $\alpha$ and $\beta$ are user-defined scalar constants, then, for $\bm{x}_0\in \mathcal{X}$, $\mathbb{P}^{\infty}\Big( \forall k \in \mathbb{N}_{\leq N}. \bm{\phi}_{\pi}^{\bm{x}_0}(k) \in \mathcal{X}\Big)\leq $
\[
\begin{cases}
1-\frac{(\alpha^{N+1}v(\bm{x}_0)-M)(\alpha-1)+\beta(\alpha^{N+1}-1)}{(\alpha+\beta-1)(\alpha^{N+1}-1)}, & \alpha>1,\\
\frac{v(\bm{x}_0)-M}{\beta(N+1)}, & \alpha=1,
\end{cases}
\]  
where $\widetilde{\mathcal{X}}$ is a set satisfying $\widetilde{\Omega}\subset\widetilde{\mathcal{X}}$ and $\widetilde{\Omega}$ is the union of $\mathcal{X}$ and
all reachable states starting from $\mathcal{X}$ in one step, i.e.,  
\[\widetilde{\Omega}=\{\bm{x}'\in \mathbb{R}^n\mid \bm{x}'=\bm{f}(\bm{x},\bm{\theta}), \bm{x}\in \mathcal{X}, \bm{\theta}\in \Theta\}\cup \mathcal{X}.\]
\end{proposition}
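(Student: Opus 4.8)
My plan is to exploit the recursive structure of the finite-time safety probability together with the submartingale behaviour that the barrier condition forces on $v$ along trajectories, mirroring the lower-bound argument of \cite{chen2025construction} but with the inequalities reversed as in \cite{yu2023safe}. Fix $\bm{x}_0\in\mathcal{X}$, write $X_k:=\bm{\phi}_{\pi}^{\bm{x}_0}(k)$ for the trajectory, and let $\tau:=\inf\{k\ge 0: X_k\notin\mathcal{X}\}$ be the first exit time from $\mathcal{X}$. Since $\bm{x}_0\in\mathcal{X}$ we have $\tau\ge 1$, and the safety event of Problem~\ref{safe} is precisely $\{\tau>N\}$; denote its probability by $p$, the quantity to be upper bounded. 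Whenever exit occurs, $X_\tau=\bm{f}(X_{\tau-1},\bm{\theta})$ with $X_{\tau-1}\in\mathcal{X}$, so $X_\tau\in\widetilde{\Omega}\setminus\mathcal{X}\subseteq\widetilde{\mathcal{X}}\setminus\mathcal{X}$ and the boundary hypothesis gives $v(X_\tau)\le 1$, while the global hypothesis gives $v\le M$ on all of $\widetilde{\mathcal{X}}$.

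The key step is to convert the one-step condition $\mathbb{E}^{\infty}[v(\bm{\phi}_{\pi}^{\bm{x}}(1))]\ge\beta+\alpha v(\bm{x})$ on $\mathcal{X}$ into a martingale inequality. For $\alpha>1$ I would introduce the shift-and-discount transform $\phi(k,y):=\alpha^{-k}\bigl(y+\tfrac{\beta}{\alpha-1}\bigr)$ and consider the stopped process $M_k:=\phi\bigl(k\wedge\tau,\,v(X_{k\wedge\tau})\bigr)$. A direct one-line computation shows $M_k$ is a submartingale: on $\{\tau>k\}$ the additive constant $\tfrac{\beta}{\alpha-1}$ is exactly calibrated so that $\beta$ is absorbed and the factor $\alpha$ cancels the discount, while on $\{\tau\le k\}$ the process is frozen. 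Hence $\mathbb{E}^{\infty}[M_{N+1}]\ge M_0=v(\bm{x}_0)+\tfrac{\beta}{\alpha-1}$. For $\alpha=1$ the analogous object is $L_k:=v(X_{k\wedge\tau})-\beta\,(k\wedge\tau)$, a submartingale by the same cancellation, giving $\mathbb{E}^{\infty}\bigl[v(X_{(N+1)\wedge\tau})\bigr]\ge v(\bm{x}_0)+\beta\sum_{k=0}^{N}\mathbb{P}^{\infty}(\tau>k)$.

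Next I would bound the terminal expectation from above and solve for $p$. Splitting on the safety event: on $\{\tau>N\}$ the process is read off at the reachable state $X_{N+1}\in\widetilde{\mathcal{X}}$ with $v\le M$, contributing the factor $\alpha^{-(N+1)}\bigl(M+\tfrac{\beta}{\alpha-1}\bigr)$ (this is where the exponent $\alpha^{N+1}$ originates); on $\{\tau\le N\}$ the exit value satisfies $v(X_\tau)\le 1$, and since $\beta>1-\alpha$ we have $1+\tfrac{\beta}{\alpha-1}=\tfrac{\alpha+\beta-1}{\alpha-1}>0$, the factor producing the term $\alpha+\beta-1$ in the denominator. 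Combining the submartingale lower bound with this upper bound yields a single affine inequality in $p$; collecting the geometric factors gives the stated closed form for $\alpha>1$. For $\alpha=1$, the monotonicity $\mathbb{P}^{\infty}(\tau>k)\ge p$ turns the sum into $(N+1)p$, and with $v\le M$ one reaches the arithmetic-series bound of the form $\tfrac{M-v(\bm{x}_0)}{\beta(N+1)}$.

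The main obstacle, and the place demanding most care, is the contribution of the exit event $\{\tau\le N\}$: because the auxiliary process is evaluated at the random time $\tau$, the discount $\alpha^{-\tau}$ and the boundary value $v(X_\tau)$ must be estimated jointly, and it is precisely here that the boundary hypothesis $v\le 1$ on $\widetilde{\mathcal{X}}\setminus\mathcal{X}$, rather than any global bound on $v$, is essential and where the exact constants are pinned down. Secondary checks are that the stopped process is a genuine submartingale over the bounded horizon (so optional stopping/telescoping applies), that $\bm{x}_0\in\mathcal{X}$ ensures $\tau\ge 1$ so the boundary estimate is available, and that the denominators $(\alpha+\beta-1)(\alpha^{N+1}-1)$ and $\beta(N+1)$ are strictly positive for $\alpha\in[1,\infty)$ and $\beta\in(1-\alpha,\infty)$, so that dividing preserves the inequality direction.
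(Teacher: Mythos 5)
This proposition is not proved in the paper at all: it is quoted from Theorem~2 of \cite{xue2024finite}, and the derivation the authors have in mind (the same one used for the paper's own discrete-time theorem) is the dynamic-programming unrolling of \cite{chen2025construction} with the inequality signs reversed. Your scaffolding matches that route in spirit: the exit time $\tau$, the identification of the safety event with $\{\tau>N\}$, the use of $v\le 1$ at the exit state and of $v\le M$ at the terminal state, and the stopped-process submartingale property are all sound. Your $\alpha=1$ argument goes through and yields $p\le\frac{M-v(\bm{x}_0)}{\beta(N+1)}$ for $p:=\mathbb{P}^{\infty}(\tau>N)$; note that this, and not the printed $\frac{v(\bm{x}_0)-M}{\beta(N+1)}$ (which is non-positive since $v(\bm{x}_0)\le M$), is evidently the correct form of the conclusion, so the statement as reproduced in the paper carries a sign typo that your derivation implicitly corrects.

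The genuine gap is in the $\alpha>1$ case. Your discounted process $M_k=\alpha^{-(k\wedge\tau)}\bigl(v(X_{k\wedge\tau})+c\bigr)$, with $c:=\frac{\beta}{\alpha-1}$ and $X_k:=\bm{\phi}_{\pi}^{\bm{x}_0}(k)$, is indeed a submartingale, but your two event-wise estimates give
\begin{equation*}
v(\bm{x}_0)+c\le p\,\alpha^{-(N+1)}(M+c)+(1-p)(1+c),
\end{equation*}
hence, when the denominator below is positive,
\begin{equation*}
p\le\frac{\alpha^{N+1}\bigl(1-v(\bm{x}_0)\bigr)}{(1+c)\,\alpha^{N+1}-M-c}.
\end{equation*}
This is \emph{not} the stated bound, which simplifies to $\frac{\alpha^{N+1}(1-v(\bm{x}_0))+M-1}{(1+c)(\alpha^{N+1}-1)}$; the two coincide only when $M=1$, so your claim that ``collecting the geometric factors gives the stated closed form'' fails. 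For $M>1$ your bound is in fact tighter (when its denominator is positive) and would imply the stated one a fortiori, but the hypotheses do not force $M\ge 1$: for $M<1$ (perfectly admissible) your bound is strictly \emph{weaker} than the stated one, so the proposition is not established in that regime, and for very large $M$ your denominator becomes non-positive and the argument produces nothing. The structural reason is that evaluating a discounted process at $\tau$ charges an exit the one-off weight $\alpha^{-\tau}\le 1$, whereas the stated formula compounds the exit penalty $(\alpha+\beta-1)$ through every remaining step. The repair is to drop the discount and unroll the stopped chain $\widetilde{X}_k:=X_{k\wedge\tau}$ directly: since $v(X_\tau)\le 1$ and $\alpha\ge 1$ imply $v\ge \alpha v+\beta-(\alpha+\beta-1)$ on frozen steps,
\begin{equation*}
\mathbb{E}^{\infty}[v(\widetilde{X}_{k+1})]\ge\alpha\,\mathbb{E}^{\infty}[v(\widetilde{X}_{k})]+\beta-(\alpha+\beta-1)\,\mathbb{P}^{\infty}(\tau\le k);
\end{equation*}
iterating from $k=0$ to $N$, then using $\mathbb{P}^{\infty}(\tau\le k)\le 1-p$ and $\mathbb{E}^{\infty}[v(\widetilde{X}_{N+1})]\le M$ and solving for $p$, gives exactly the stated expression for $\alpha>1$ and $\frac{M-v(\bm{x}_0)}{\beta(N+1)}$ for $\alpha=1$.
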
  


\subsubsection{2.2.2 Barrier-Like Conditions for Continuous-time Systems}

\begin{proposition}[Theorem 2, \citep{xue2023new}]
\label{coro5}
Assume the safe set $\mathcal{X}$ is bounded and open.  Suppose there exist a barrier function $v\colon [0,T]\times \mathbb{R}^n\rightarrow \mathbb{R}$ and a function $w\colon [0,T]\times \mathbb{R}^n \rightarrow \mathbb{R}$ with $\sup_{(t,\bm{x})\in [0,T]\times \overline{\mathcal{X}}}|w(t,\bm{x})|\leq M$ that are continuously differentiable over $t$ and twice continuously differentiable over $\bm{x}$, and $\beta \in \mathbb{R}$, satisfying 
   \begin{equation}
   \small
      \label{upper_bound_2}
        \begin{cases}
            \mathcal{L}v(t,\bm{x})\geq \alpha v(t,\bm{x})+\beta, & \forall \bm{x}\in \mathcal{X}\setminus \mathcal{X}_r,\forall t \in [0,T],\\
            \frac{\partial v(t,\bm{x})}{\partial t}\geq \alpha v(t,\bm{x})+\beta, & \forall \bm{x}\in \partial \mathcal{X}\cup \partial \mathcal{X}_r, \forall t\in [0,T],\\ 
            v(t,\bm{x})\leq 1+\frac{\partial w(t,\bm{x})}{\partial t}, &\forall \bm{x}\in \partial \mathcal{X}_r, \forall t\in [0,T],\\
            v(t,\bm{x})\leq \mathcal{L}w(t,\bm{x}), & \forall \bm{x}\in \mathcal{X}\setminus \mathcal{X}_r, \forall t \in [0,T],\\
            v(t,\bm{x})\leq \frac{\partial w(t,\bm{x})}{\partial t}, & \forall \bm{x}\in \partial \mathcal{X}, \forall t\in [0,T],
        \end{cases}
    \end{equation}
    where $\alpha$ is a user-defined scalar constant,
then
\begin{equation*}
\mathbb{P}_{\bm{x}_0}^{[0,T]}\geq 
\begin{cases}
\frac{(\frac{1}{\alpha}v(0,\bm{x}_0)+\frac{\beta}{\alpha^2})(e^{\alpha T}-1)-\frac{\beta}{\alpha}T}{T} -\frac{2M}{T}, & \text{if $\alpha\neq 0$}, \\
v(0,\bm{x}_0)+\frac{1}{2}\beta T-\frac{2M}{T}, & \text{if $\alpha=0$}
\end{cases}
\end{equation*}
for $\bm{x}_0\in \mathcal{X}\setminus \mathcal{X}_r$.    
\end{proposition}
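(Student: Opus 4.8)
The plan is to analyze the \emph{stopped-and-frozen} process obtained by halting $\bm{X}_{\bm{x}_0}^{\bm{w}}$ the first time it leaves the open region $\mathcal{X}\setminus\mathcal{X}_r$, while still letting the time coordinate run up to $T$. Concretely, I would set $\tau:=\inf\{t\geq 0 : \bm{X}_{\bm{x}_0}^{\bm{w}}(t)\in \partial\mathcal{X}\cup\partial\mathcal{X}_r\}$ and $\bm{Y}(t):=\bm{X}_{\bm{x}_0}^{\bm{w}}(t\wedge\tau)$. Because $\mathcal{X}$ is bounded, the path stays in the compact set $\overline{\mathcal{X}}$ up to $\tau$, so Assumption~\ref{assump:sde_conditions} and Proposition~\ref{prop:inf_generator} guarantee a well-defined solution and the applicability of Dynkin's formula. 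The first step is to establish a \emph{generalized Dynkin identity} for $\bm{Y}$: applying Itô's formula to an admissible $\psi$ on $[0,t\wedge\tau]$ and then accounting for the pure time-evolution of the frozen spatial position on $[\tau,t]$ yields
\[
\mathbb{E}[\psi(t,\bm{Y}(t))]=\psi(0,\bm{x}_0)+\mathbb{E}\!\left[\int_0^t \mathcal{A}\psi(s,\bm{Y}(s))\,ds\right],
\]
where $\mathcal{A}\psi$ equals $\mathcal{L}\psi$ on $\mathcal{X}\setminus\mathcal{X}_r$ and equals $\partial_t\psi$ on $\partial\mathcal{X}\cup\partial\mathcal{X}_r$.

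Second, I would use the growth conditions on $v$. The first two lines of \eqref{upper_bound_2} say precisely that $\mathcal{A}v\geq \alpha v+\beta$ along every path of $\bm{Y}$. Writing $h(t):=\mathbb{E}[v(t,\bm{Y}(t))]$, the generalized identity gives $h(t)=v(0,\bm{x}_0)+\int_0^t \mathbb{E}[\mathcal{A}v(s,\bm{Y}(s))]\,ds$, so $h'(t)=\mathbb{E}[\mathcal{A}v(t,\bm{Y}(t))]\geq \alpha h(t)+\beta$ with $h(0)=v(0,\bm{x}_0)$; a standard comparison argument then forces $h(t)\geq g(t)$, where $g$ solves $g'=\alpha g+\beta$, $g(0)=v(0,\bm{x}_0)$, i.e.\ $g(t)=e^{\alpha t}v(0,\bm{x}_0)+\tfrac{\beta}{\alpha}(e^{\alpha t}-1)$ (and $g(t)=v(0,\bm{x}_0)+\beta t$ when $\alpha=0$). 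Integrating $g$ over $[0,T]$ reproduces exactly the numerator appearing in the claimed bound.

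Third, I would exploit the remaining three lines of \eqref{upper_bound_2}, which combine into the single pathwise inequality $v(t,\bm{Y}(t))\leq \mathcal{A}w(t,\bm{Y}(t))+1_{\partial\mathcal{X}_r}(\bm{Y}(t))$ (the interior case is the fourth line, $\partial\mathcal{X}$ the fifth, and $\partial\mathcal{X}_r$ the third, where the indicator contributes the extra $+1$). Integrating over $[0,T]$, taking expectations, and invoking the generalized Dynkin identity for $w$ gives
\[
\int_0^T h(t)\,dt\leq \mathbb{E}[w(T,\bm{Y}(T))]-w(0,\bm{x}_0)+\mathbb{E}\!\left[\int_0^T 1_{\partial\mathcal{X}_r}(\bm{Y}(t))\,dt\right].
\]
The bound $|w|\leq M$ controls the first difference by $2M$. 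For the occupation term, I would observe that $\bm{Y}$ can sit on $\partial\mathcal{X}_r$ only after freezing there, which is exactly the reach-avoid success event; hence $\int_0^T 1_{\partial\mathcal{X}_r}(\bm{Y}(t))\,dt=(T-\tau)\,1_{\mathrm{success}}\leq T\,1_{\mathrm{success}}$, whose expectation is at most $T\,\mathbb{P}_{\bm{x}_0}^{[0,T]}$. Combining with $\int_0^T h(t)\,dt\geq\int_0^T g(t)\,dt$ and dividing by $T$ yields the stated inequality in both the $\alpha\neq0$ and $\alpha=0$ cases.

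I expect the main obstacle to be the rigorous justification of the generalized Dynkin identity for the stopped-and-frozen process: cleanly splitting the evolution into the diffusive phase governed by $\mathcal{L}$ and the frozen phase governed by $\partial_t$, verifying the required integrability (which relies on $\mathcal{X}$ being bounded together with Assumption~\ref{assump:sde_conditions}), and justifying the Fubini interchange used to differentiate $h$ in the second step. The identification of the occupation-time integral with the success event, and the sign bookkeeping ensuring the indicator produces a $+1$ only on $\partial\mathcal{X}_r$, are conceptually the crux but technically routine once the frozen-process formalism is in place.
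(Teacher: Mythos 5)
Your proposal is correct and follows essentially the same route as the paper's underlying argument (the paper itself only cites Theorem 2 of \cite{xue2023new} for this proposition, but the machinery it exhibits in the proof of Theorem \ref{thm_con} and in the Appendix is exactly yours): the stopped-and-frozen process with generator equal to $\mathcal{L}$ on $\mathcal{X}\setminus\mathcal{X}_r$ and $\partial_t$ on $\partial\mathcal{X}\cup\partial\mathcal{X}_r$, Dynkin's formula plus a Gr\"onwall/ODE comparison to get $\mathbb{E}[v(t,\bm{Y}(t))]\geq e^{\alpha t}v(0,\bm{x}_0)+\tfrac{\beta}{\alpha}(e^{\alpha t}-1)$, the $2M$ bound from $\sup|w|\leq M$, and the bound of the $\partial\mathcal{X}_r$-occupation time by $T$ times the reach-avoid probability (your pathwise bound $(T-\tau)1_{\mathrm{success}}\leq T\,1_{\mathrm{success}}$ is equivalent to the paper's use of the monotonicity in Lemma \ref{equiv}). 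The final bookkeeping reproduces both the $\alpha\neq 0$ and $\alpha=0$ bounds exactly, so there is no gap to flag.
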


\begin{proposition}[Corollary 3, \citep{xue2023new}]
\label{con_ex2}
    Suppose there exist twice continuously differentiable functions $v\colon \mathbb{R}^n\rightarrow \mathbb{R}$ and $w\colon \mathbb{R}^n \rightarrow \mathbb{R}$ with $\sup_{\bm{x}\in \overline{\mathcal{X}}}|w(\bm{x})|\leq M$, satisfying 
   \begin{equation}
      \label{upper_bound_3}
        \begin{cases}
            \mathcal{L}v(\bm{x})\geq \alpha v(\bm{x})+\beta, & \forall \bm{x}\in \mathcal{X}\setminus \mathcal{X}_r,\\
            0\geq \alpha v(\bm{x})+\beta, & \forall \bm{x}\in \partial \mathcal{X}\cup \partial \mathcal{X}_r\\ v(\bm{x})\leq 1 &\forall \bm{x}\in \partial \mathcal{X}_r,\\
            v(\bm{x})\leq \mathcal{L}w(\bm{x}), & \forall \bm{x}\in \mathcal{X}\setminus \mathcal{X}_r, \\
            v(\bm{x})\leq 0, & \forall \bm{x}\in \partial \mathcal{X},
        \end{cases}
    \end{equation}
    where $\alpha$ and $\beta$ are user-defined scalar constants,
then
\[\mathbb{P}_{\bm{x}_0}^{[0,T]}\geq
\begin{cases}
\frac{(\frac{1}{\alpha}v(\bm{x}_0)+\frac{\beta}{\alpha^2})(e^{\alpha T}-1)-\frac{\beta}{\alpha}T}{T} -\frac{2M}{T}, & \text{if $\alpha\neq 0$}, \\
v(\bm{x}_0)+\frac{1}{2}\beta T-\frac{2M}{T}, &\text{if $\alpha=0$}
\end{cases}
\]
for $\bm{x}_0\in \mathcal{X}\setminus \mathcal{X}_r$.    
\end{proposition}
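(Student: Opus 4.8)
The plan is to obtain Proposition~\ref{con_ex2} as a direct specialization of Proposition~\ref{coro5}, exploiting the fact that the conditions~\eqref{upper_bound_3} are exactly what the conditions~\eqref{upper_bound_2} become once the auxiliary functions are taken to be independent of time. Concretely, given time-independent $v(\bm{x})$ and $w(\bm{x})$ satisfying~\eqref{upper_bound_3}, I would lift them to $\widetilde{v}(t,\bm{x}):=v(\bm{x})$ and $\widetilde{w}(t,\bm{x}):=w(\bm{x})$ on $[0,T]\times\mathbb{R}^n$. These lifts are continuously differentiable in $t$ (with vanishing $t$-derivative) and twice continuously differentiable in $\bm{x}$ by hypothesis, and the boundedness requirement transfers verbatim, since $\sup_{(t,\bm{x})\in[0,T]\times\overline{\mathcal{X}}}|\widetilde{w}(t,\bm{x})|=\sup_{\bm{x}\in\overline{\mathcal{X}}}|w(\bm{x})|\leq M$.

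The central step is to check that each of the five inequalities in~\eqref{upper_bound_2} collapses to its counterpart in~\eqref{upper_bound_3}. By the generator formula of Proposition~\ref{prop:inf_generator}, the term $\frac{\partial}{\partial t}$ contributes nothing for a time-independent argument, so $\mathcal{L}\widetilde{v}(t,\bm{x})=\mathcal{L}v(\bm{x})$ and $\mathcal{L}\widetilde{w}(t,\bm{x})=\mathcal{L}w(\bm{x})$, while $\frac{\partial \widetilde{v}}{\partial t}\equiv 0$ and $\frac{\partial \widetilde{w}}{\partial t}\equiv 0$. Substituting these identities, the first and fourth conditions of~\eqref{upper_bound_2} become $\mathcal{L}v\geq\alpha v+\beta$ and $v\leq\mathcal{L}w$; the second becomes $0\geq\alpha v+\beta$; the third becomes $v\leq 1$; and the fifth becomes $v\leq 0$. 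These are precisely the conditions of~\eqref{upper_bound_3}, so $(\widetilde{v},\widetilde{w})$ satisfies the hypotheses of Proposition~\ref{coro5} with the same constants $\alpha$, $\beta$, and $M$.

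It then remains to invoke Proposition~\ref{coro5} and read off its conclusion. Since $\widetilde{v}(0,\bm{x}_0)=v(\bm{x}_0)$, the resulting lower bound for $\mathbb{P}_{\bm{x}_0}^{[0,T]}$ coincides term-for-term with the claimed bound in both the $\alpha\neq 0$ and $\alpha=0$ branches, which completes the argument. I anticipate no genuine obstacle: the work is entirely bookkeeping for the reduction, and the only points demanding care are the vanishing of the $\frac{\partial}{\partial t}$ terms in the second, third, and fifth conditions of~\eqref{upper_bound_2}, together with the fact that Proposition~\ref{con_ex2} tacitly inherits the standing assumption of Proposition~\ref{coro5} that $\mathcal{X}$ is bounded and open.
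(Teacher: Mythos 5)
Your proposal is correct and follows essentially the same route as the source: the paper states Proposition~\ref{con_ex2} without proof, quoting it as Corollary~3 of \cite{xue2023new}, i.e., precisely as a specialization of that work's Theorem~2 (Proposition~\ref{coro5} here), and your time-independent lifting $\widetilde{v}(t,\bm{x})=v(\bm{x})$, $\widetilde{w}(t,\bm{x})=w(\bm{x})$---under which the $\partial/\partial t$ terms vanish, $\mathcal{L}$ reduces to the spatial generator, and \eqref{upper_bound_2} collapses term-by-term to \eqref{upper_bound_3}---is exactly that derivation. Your closing observation that Proposition~\ref{con_ex2} tacitly inherits the standing assumption that $\mathcal{X}$ is bounded and open (needed for Proposition~\ref{coro5} and for the boundedness of $w$ on $\overline{\mathcal{X}}$ to be meaningful) is also the correct reading.
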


It is observed that the condition \eqref{upper_bound_3} is independent of $T$. Therefore, if condition \eqref{upper_bound_3} holds, the lower bound in Proposition \eqref{con_ex2} holds for any $T>0$.

\section{Refined Barrier-like Conditions}
\label{sec:methods_discrete}
This section presents barrier-like conditions for upper bounds on finite-time safety (discrete-time systems) and lower bounds on reach–avoid probabilities (continuous-time systems), along with a theoretical comparison to existing results.

\subsection{Analysis on Barrier-like Conditions in Subsection \ref{sub:dpf}}

In this subsection, we analyze the barrier-like conditions introduced in Subsection \ref{sub:dpf}, highlighting their limitations and motivating our approach. 

The conditions in Propositions \ref{pro_e7}–\ref{con_ex2} exhibit several issues that may complicate or prevent practical computations:

(A) The conditions in Propositions~\ref{pro_e7}--\ref{con_ex2} require certain functions—specifically, $v(\bm{x})$ in discrete-time systems and $w(t,\bm{x})$, $w(\bm{x})$ in continuous-time systems—to be bounded. Because these functions and their boundedness conditions enter the computation, this requirement may complicate the overall procedure. Moreover, if the safe set $\mathcal{X}$ and/or $\widetilde{\mathcal{X}}$ is unbounded, the bounds may no longer be applicable for some functions, such as polynomials of degree one or higher in $\bm{x}$, since they cannot remain bounded over an unbounded domain. In such cases, $v(\bm{x})$ and $w(\bm{x})$ must be restricted to constants, while $w(t,\bm{x})$ can depend polynomially on $t$ only, with no dependence on $\bm{x}$.

Let us examine these conditions in more detail. We first consider the condition in Proposition~\ref{pro_e7}. If $v(\bm{x}) \equiv C$ for all $\bm{x} \in \mathbb{R}^n$, condition~\eqref{pro_e71} implies
$C \le \min\left\{\frac{\beta}{1-\alpha}, 1\right\}~\text{when }\alpha > 1$.  Thus, the upper bound for $\mathbb{P}^{\infty}\Big( \forall k \in \mathbb{N}_{\leq N}. \bm{\phi}_{\pi}^{\bm{x}_0}(k) \in \mathcal{X}\Big)$ will be
$\begin{cases}
\frac{(\alpha-1)(1-C)}{\alpha+\beta-1}, &\text{if~} \alpha>1,\\
0, & \text{if~}\alpha=0,
\end{cases}$
for any $\bm{x}_0\in \mathcal{X}$. Further, from $C \le \min\left\{\frac{\beta}{1-\alpha}, 1\right\}$ and $\beta>\alpha-1$, we have $\frac{(\alpha-1)(1-C)}{\alpha+\beta-1}\geq 1$, which is useless in practice.

Then, we examine the condition in Proposition \ref{coro5} with $w(t)$ instead of $w(t,\bm{x})$. In this case, since $v(t,\bm{x})$ is continuous over $\bm{x}$, the condition 
\eqref{upper_bound_2} is reduced to 
\begin{equation}
      \label{upper_bound_2_wt}
        \begin{cases}
            \mathcal{L}v(t,\bm{x})\geq v'(t,\bm{x}), & \forall \bm{x}\in \mathcal{X}\setminus \mathcal{X}_r,\forall t \in [0,T],\\
            \frac{\partial v(t,\bm{x})}{\partial t}\geq v'(t,\bm{x}), & \forall \bm{x}\in \partial \mathcal{X}\cup \partial \mathcal{X}_r, \forall t\in [0,T],\\
            v(t,\bm{x})\leq \frac{\partial w(t)}{\partial t}, & \forall \bm{x}\in \overline{\mathcal{X}\setminus \mathcal{X}_r}, \forall t\in [0,T],
        \end{cases}
    \end{equation}
    where $v'(t,\bm{x})=\alpha v(t,\bm{x})+\beta$.
    
In this context, the lower bounds in Proposition \ref{coro5} should also be less than or equal to 0. This claim can be justified via following the proof of Theorem 2 in \citep{xue2023new}. We also present the proof in Appendix. 

Finally, we examine the condition in Proposition \ref{con_ex2} with $w(\bm{x})\equiv C$ over $\bm{x}\in \mathbb{R}^n$. In this context, since $v(\bm{x})$ is continuous over $\bm{x}$, the condition \eqref{upper_bound_3} will be reduced to:
\begin{equation}
      \label{upper_bound_3_bw}
        \begin{cases}
            \mathcal{L}v(\bm{x})\geq \alpha v(\bm{x})+\beta, & \forall \bm{x}\in \mathcal{X}\setminus \mathcal{X}_r,\\
            0\geq \alpha v(\bm{x})+\beta, & \forall \bm{x}\in \partial \mathcal{X}\cup \partial \mathcal{X}_r\\ 
            v(\bm{x})\leq 0, & \forall \bm{x}\in \overline{\mathcal{X}\setminus \mathcal{X}_r}.
        \end{cases}
    \end{equation}
This is a time-independent counterpart of \eqref{upper_bound_2_wt}. Hence, by the same reasoning, the lower bounds in Proposition \ref{con_ex2} are also guaranteed to be non-positive.

\textbf{Combining the above analysis, we conclude that condition \eqref{upper_bound_2} with $w: \mathbb{R}\rightarrow \mathbb{R}$ and condition \eqref{upper_bound_3} with constant $w(\bm{x})\equiv C$ are not applicable to systems where the safe set $\mathcal{X}$ is unbounded. Thus, in the unbounded case, $w(t,\bm{x})$ and $w(\bm{x})$ cannot be  polynomials when solving \eqref{upper_bound_2} and \eqref{upper_bound_3}.}

\noindent (B) The conditions in Propositions \ref{coro5} and \ref{con_ex2} require more than one function, increasing computational burden.

In the following section, we present new conditions specifically designed to overcome these issues.

\subsection{Barrier-like Conditions for Discrete-time Systems}
This section present a new barrier-like condition for establishing upper bounds on finite-time reach–avoid probabilities in discrete-time systems. 

\begin{theorem}
    If there exist a function $v: \mathbb{R}^n \rightarrow \mathbb{R}$, $\alpha \geq 0$, and  $\beta\in \mathbb{R}$ satisfying 
    \begin{equation}
    \label{safe_upper}
        \begin{cases}
            v(\bm{x})\leq 1_{\mathbb{R}^n\setminus \mathcal{X}} (\bm{x}),\\
            \mathbb{E}_{\bm{\theta}}[v(\bm{f}(\bm{x},\bm{\theta}))] \geq \alpha v(\bm{x}) +\beta, &\forall \bm{x}\in \mathcal{X},\\
            v(\bm{x})\alpha^N+(\sum_{i=0}^{N-1} \alpha^i)\beta \leq 1, &\forall \bm{x}\in \mathbb{R}^n \setminus \mathcal{X},
        \end{cases}
    \end{equation}
    then, for $\bm{x}_0\in \mathcal{X}$, $\mathbb{P}^{\infty}\left(
\begin{aligned}
\forall  k\in \mathbb{N}_{\leq N}. \bm{\phi}_{\pi}^{\bm{x}_0}(k)\in \mathcal{X}
\end{aligned}
\right)\leq 
1-\big(v(\bm{x}_0)\alpha^N+\beta \sum_{i=0}^{N-1} \alpha^i \big)$.
\end{theorem}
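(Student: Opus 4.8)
The plan is to pass to the complementary event and analyze it through a value-function recursion. Writing $\bm{X}_k := \bm{\phi}_{\pi}^{\bm{x}_0}(k)$ and letting $\tau := \inf\{k \in \mathbb{N} : \bm{X}_k \notin \mathcal{X}\}$ be the first-exit time, the target event $\{\forall k \in \mathbb{N}_{\leq N}.\ \bm{X}_k \in \mathcal{X}\}$ is exactly $\{\tau > N\}$, so it suffices to lower-bound the exit probability $\mathbb{P}^{\infty}(\tau \leq N) = 1 - \mathbb{P}^{\infty}(\tau > N)$ by $\alpha^N v(\bm{x}_0) + \beta \sum_{i=0}^N \alpha^i$. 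I would introduce the functions $P_m(\bm{x}) := \mathbb{P}^{\infty}(\exists i \in \mathbb{N}_{\leq m}.\ \bm{X}_i \notin \mathcal{X} \mid \bm{X}_0 = \bm{x})$, which obey the Bellman-type recursion $P_0 = 1_{\mathbb{R}^n \setminus \mathcal{X}}$, $P_{m+1}(\bm{x}) = \mathbb{E}_{\bm{\theta}}[P_m(\bm{f}(\bm{x},\bm{\theta}))]$ for $\bm{x} \in \mathcal{X}$, and $P_m \equiv 1$ on $\mathbb{R}^n \setminus \mathcal{X}$; the quantity to be bounded is $P_N(\bm{x}_0)$.

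The core of the argument is an induction on the horizon establishing a lower bound of the form $P_m(\bm{x}) \geq \alpha^m v(\bm{x}) + \beta \sum_{i=0}^{m-1} \alpha^i$ (with the convention that the empty sum is $0$). The base case $m = 0$ is precisely the first inequality of \eqref{safe_upper}, since $P_0 = 1_{\mathbb{R}^n \setminus \mathcal{X}} \geq v$. For the inductive step at $\bm{x} \in \mathcal{X}$, I would substitute the hypothesis into the recursion, pull the constant offset through the expectation, and invoke the drift inequality $\mathbb{E}_{\bm{\theta}}[v(\bm{f}(\bm{x},\bm{\theta}))] \geq \alpha v(\bm{x}) + \beta$; because $\alpha \geq 0$, multiplying this inequality by $\alpha^m$ preserves its direction, and the geometric offset telescopes by exactly one term, producing the next power $\alpha^{m+1}$ and the extended sum. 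Equivalently, one may phrase this as showing that the discounted process $\alpha^{N-k} v(\bm{X}_k) + \beta \sum_{i=0}^{N-k-1} \alpha^i$ is a submartingale while the trajectory remains in $\mathcal{X}$ and applying optional stopping at $\tau \wedge N$; I expect the recursion form to be cleaner here since the horizon is finite and no integrability technicalities intrude.

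The delicate point, and the main obstacle, is the treatment of the off-$\mathcal{X}$ states that the trajectory may visit before time $N$: the recursion for $\bm{x} \in \mathcal{X}$ evaluates $P_m$ at $\bm{f}(\bm{x},\bm{\theta})$, which can leave $\mathcal{X}$, and there $P_m \equiv 1$ must dominate $\alpha^m v(\bm{x}) + \beta \sum_{i=0}^{m-1} \alpha^i$. This is the content of the third inequality of \eqref{safe_upper}, and closing the induction requires this terminal/boundary bound to be available at the relevant horizons, not merely at $m = N$; reconciling the fixed terminal condition with the time-varying discounted target — equivalently, pinning down the exact upper index of the geometric sum $\sum_{i=0}^{N} \alpha^i$ appearing in the claimed bound — is the part that demands the most careful bookkeeping. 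Finally, I would separate the regimes $\alpha > 0$ and $\alpha = 0$ (where $\alpha^0 = 1$ and all higher powers vanish, so the formula degenerates to $\beta$) to confirm the boundary accounting is consistent, and then evaluate the completed induction at $m = N$, $\bm{x} = \bm{x}_0 \in \mathcal{X}$, obtaining the stated upper bound on the safety probability via $\mathbb{P}^{\infty}(\tau > N) = 1 - P_N(\bm{x}_0)$.
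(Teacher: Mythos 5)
Your overall strategy is exactly the paper's intended one: the paper's proof is a single sentence deferring to the dynamic-programming proof of Theorem~1 of \cite{chen2025construction} ``with the inequality sign reversed,'' and what you describe---the value-function recursion $P_0=1_{\mathbb{R}^n\setminus\mathcal{X}}$, $P_{m+1}(\bm{x})=\mathbb{E}_{\bm{\theta}}[P_m(\bm{f}(\bm{x},\bm{\theta}))]$ on $\mathcal{X}$, $P_m\equiv 1$ off $\mathcal{X}$, plus the induction $P_m(\bm{x})\geq \alpha^m v(\bm{x})+\beta\sum_{i=0}^{m-1}\alpha^i$---is precisely that sign-reversed argument. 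Your inductive step inside $\mathcal{X}$ (multiply the drift inequality by $\alpha^m\geq 0$ and telescope) is correct.

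The problem is that the two issues you explicitly flag as ``careful bookkeeping'' are never resolved, and the principal one cannot be: your induction, evaluated at $m=N$, delivers $P_N(\bm{x}_0)\geq \alpha^N v(\bm{x}_0)+\beta\sum_{i=0}^{N-1}\alpha^i$, whereas the theorem asserts the strictly stronger bound with $\beta\sum_{i=0}^{N}\alpha^i$; the extra term $\beta\alpha^N$ is not produced by any arrangement of this recursion, and in fact the stated bound is false. Concretely, take $n=1$, $\bm{f}(x,\theta)\equiv 0$, $\mathcal{X}=(-1,1)$, $\bm{x}_0=0$, $N=2$, $\alpha=2$, $\beta=1$, and $v\equiv -1$ on $\mathcal{X}$, $v\equiv -7/4$ off $\mathcal{X}$. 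All three conditions of \eqref{safe_upper} hold (the third reads $-7+7\leq 1$), the system never leaves $\mathcal{X}$ from $\bm{x}_0=0$, yet the claimed upper bound on the safety probability is $1-(-4+7)=-2<1$. So what you diagnosed as an indexing headache is a genuine defect of the statement itself: the sum must be $\sum_{i=0}^{N-1}\alpha^i$ in both the third condition of \eqref{safe_upper} and the conclusion, which is exactly what your induction proves. Your second flagged gap---that the off-$\mathcal{X}$ bound $\alpha^m v(\bm{x})+\beta\sum_{i=0}^{m-1}\alpha^i\leq 1$ is needed at every horizon $m\leq N-1$, not merely at $m=N$---is real but fixable, and your proposal should have closed it: for fixed $\bm{x}$ the quantity $g_m(\bm{x})=\alpha^m v(\bm{x})+\beta\sum_{i=0}^{m-1}\alpha^i$ is affine in $\alpha^m$ and hence monotone in $m$, so $g_m(\bm{x})\leq\max\{g_0(\bm{x}),g_N(\bm{x})\}$ for all $0\leq m\leq N$, and the two endpoint bounds are exactly the first condition of \eqref{safe_upper} ($g_0=v\leq 1$ off $\mathcal{X}$) together with the \emph{corrected} third condition. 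In short: right approach, same as the paper's, but as a proof of the literal statement it has a gap that no bookkeeping can close, because the statement is wrong; the provable statement is the one your own recursion naturally yields.
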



\begin{pf}
   This conclusion can be justified by following the proof of Theorem 1 in \citep{chen2025construction}, with the inequality sign reversed. \qed
\end{pf}

\subsection{Barrier-like Conditions for Continuous-time stems}
This section presents new barrier-like conditions for establishing lower bounds on reach–avoid probabilities. 

The construction of the condition lies on an auxiliary stochastic process $\{\widetilde{\bm{X}}_{\bm{x}_0}^{\bm{w}}(t), t\in [0,\infty)\}$ for $\bm{x}_0\in \overline{\mathcal{X}\setminus \mathcal{X}_r}$ that is a stopped process corresponding to $\{\bm{X}_{\bm{x}_0}^{\bm{w}}(t), t\in [0,\infty)\}$ and the set $\overline{\mathcal{X}\setminus \mathcal{X}_r}$, i.e., 
\begin{equation}
\widetilde{\bm{X}}_{\bm{x}_0}^{\bm{w}}(t)=
\begin{cases}
\bm{X}_{\bm{x}_0}^{\bm{w}}(t) &\text{\rm~if~}t<\tau^{\bm{x}_0}(\bm{w}),\\
\bm{X}_{\bm{x}_0}^{\bm{w}}(\tau^{\bm{x}_0}(\bm{w})) & \text{\rm~if~}t\geq \tau^{\bm{x}_0}(\bm{w}),
\end{cases}
\end{equation}
where $\tau^{\bm{x}_0}(\bm{w})=\inf\{t\mid \bm{X}_{\bm{x}_0}^{\bm{w}}(t) \in \partial \mathcal{X}\cup \partial \mathcal{X}_r\}$ is the first time of hitting $\partial \mathcal{X}\cup \partial \mathcal{X}_r$. The stopped process $\widetilde{\bm{X}}_{\bm{x}_0}^{\bm{w}}(t)$ inherits the right continuity and strong Markovian property of $\bm{X}_{\bm{x}_0}^{\bm{w}}(t)$. Moreover, the infinitesimal generator corresponding to $\widetilde{\bm{X}}_{\bm{x}_0}^{\bm{w}}(t)$ is identical to the one corresponding to $\bm{X}_{\bm{x}_0}^{\bm{w}}(t)$ over $\mathcal{X}\setminus \mathcal{X}_r$, and is equal to zero on the boundary $\partial \mathcal{X}\cup \partial \mathcal{X}_r$ \citep{kushner1967}. That is, for $v(\bm{x},t)$ being twice continuously differentiable over $\bm{x}$ and continuously differentiable over $t$, 
\begin{equation}
\label{stop_l}
\begin{split}
\widetilde{\mathcal{L}}v(\bm{x},t)=\mathcal{L}v(\bm{x},t)
\end{split}
\end{equation} for $(t,\bm{x})\in [0,T] \times \mathcal{X}\setminus \mathcal{X}_r$ and $\widetilde{\mathcal{L}}v(\bm{x})=\frac{\partial v(t,\bm{x})}{\partial t}$ for $(t,\bm{x})\in [0,T] \times \mathcal{X}\setminus \mathcal{X}_r$. 

The reach-avoid probability of \eqref{SDE} on $[0,T]$ equals the probability that the auxiliary process hits $\partial \mathcal{X}_r$ at time $T$.
\begin{lemma}\citep{xue2023new}
\label{equiv}
    Given a time instant $T>0$ and $\bm{x}_0\in \mathcal{X}$, $
    \mathbb{P}_{\bm{x}_0}^{[0,T]}=\mathbb{P}(\{\bm{w}\mid \widetilde{\bm{X}}_{\bm{x}_0}^{\bm{w}}(T)\in \mathcal{X}_r\})=\mathbb{E}[1_{\mathcal{X}_r}(\widetilde{\bm{X}}_{\bm{x}_0}^{\bm{w}}(T))]$. Moreover, for any $0<T_1\leq T_2$, $\mathbb{P}(\{\bm{w}\mid \widetilde{\bm{X}}_{\bm{x}_0}^{\bm{w}}(T_1)\in \mathcal{X}_r\})\leq \mathbb{P}(\{\bm{w}\mid \widetilde{\bm{X}}_{\bm{x}_0}^{\bm{w}}(T_2)\in \mathcal{X}_r\})$.
\end{lemma}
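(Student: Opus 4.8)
The plan is to prove the two assertions separately, both resting on a single pathwise characterization of the stopped process. For the first equality I would show that the reach-avoid event and the event $\{\bm{w}\mid\widetilde{\bm{X}}_{\bm{x}_0}^{\bm{w}}(T)\in\mathcal{X}_r\}$ coincide almost surely, after which the identity with $\mathbb{E}[1_{\mathcal{X}_r}(\widetilde{\bm{X}}_{\bm{x}_0}^{\bm{w}}(T))]$ is merely the definition of probability as the expectation of an indicator. The topological fact I would record first is that, since $\mathcal{X}$ is open, $\mathcal{X}_r$ is closed, and $\mathcal{X}_r\subseteq\mathcal{X}$, the two hitting boundaries are disjoint: every point of $\partial\mathcal{X}$ lies outside the open set $\mathcal{X}$, whereas $\partial\mathcal{X}_r\subseteq\mathcal{X}_r\subseteq\mathcal{X}$, so $\partial\mathcal{X}\cap\mathcal{X}_r=\emptyset$ and in particular $\partial\mathcal{X}\cap\partial\mathcal{X}_r=\emptyset$. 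Hence at the first hitting time $\tau^{\bm{x}_0}(\bm{w})$ the path lands on exactly one of the two disjoint boundary sets, and the frozen value $\widetilde{\bm{X}}_{\bm{x}_0}^{\bm{w}}(t)$ for $t\geq\tau^{\bm{x}_0}(\bm{w})$ lies in $\mathcal{X}_r$ \emph{if and only if} it was $\partial\mathcal{X}_r$ that was hit.

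For the first equality I would argue pathwise on $[0,T]$ via three exhaustive cases, using continuity of the sample paths. If the path hits $\partial\mathcal{X}_r$ at some $t\leq T$ while staying in $\mathcal{X}$ beforehand, then $\tau^{\bm{x}_0}(\bm{w})\leq T$, the frozen value lies in $\partial\mathcal{X}_r\subseteq\mathcal{X}_r$, and both events hold. If instead the path first hits $\partial\mathcal{X}$, it is frozen at a point of $\partial\mathcal{X}$, which by disjointness is not in $\mathcal{X}_r$, and both events fail. If neither boundary is reached by time $T$, then $\widetilde{\bm{X}}_{\bm{x}_0}^{\bm{w}}(T)=\bm{X}_{\bm{x}_0}^{\bm{w}}(T)$ remains in $\mathcal{X}\setminus\mathcal{X}_r$; since entering the closed set $\mathcal{X}_r$ from the starting region forces a prior crossing of $\partial\mathcal{X}_r$ by continuity, this value is not in $\mathcal{X}_r$, and again both events fail. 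This yields the set equality, whence the first two expressions agree.

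For the monotonicity I would exploit that the stopped process is constant after $\tau^{\bm{x}_0}(\bm{w})$. If $\widetilde{\bm{X}}_{\bm{x}_0}^{\bm{w}}(T_1)\in\mathcal{X}_r$, then by the same continuity argument the stopping at $\partial\mathcal{X}_r$ has already occurred by $T_1$, so $\tau^{\bm{x}_0}(\bm{w})\leq T_1\leq T_2$ and therefore $\widetilde{\bm{X}}_{\bm{x}_0}^{\bm{w}}(T_2)=\widetilde{\bm{X}}_{\bm{x}_0}^{\bm{w}}(T_1)\in\mathcal{X}_r$. Thus $\{\bm{w}\mid\widetilde{\bm{X}}_{\bm{x}_0}^{\bm{w}}(T_1)\in\mathcal{X}_r\}\subseteq\{\bm{w}\mid\widetilde{\bm{X}}_{\bm{x}_0}^{\bm{w}}(T_2)\in\mathcal{X}_r\}$, and monotonicity of $\mathbb{P}$ delivers the inequality.

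The main obstacle I anticipate is making the pathwise equivalence fully rigorous: in particular, justifying that a continuous path cannot reach the closed target $\mathcal{X}_r$ without first meeting $\partial\mathcal{X}_r$, and confirming measurability of $\tau^{\bm{x}_0}(\bm{w})$ and of both events, so that the claimed set equality holds between genuinely measurable sets (a null-set caveat may be required where a path merely touches a boundary tangentially without crossing). The topological disjointness of $\partial\mathcal{X}$ and $\partial\mathcal{X}_r$ is precisely what renders the case split clean; the residual effort is ensuring the continuity-based crossing argument is airtight.
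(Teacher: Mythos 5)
Your proof is correct, but there is nothing in this paper to compare it against: the lemma is imported from \cite{xue2023new} (note the citation in its statement) and the paper supplies no proof of it, deferring entirely to that reference. Your pathwise argument is the standard justification and is sound: the disjointness $\partial\mathcal{X}\cap\partial\mathcal{X}_r=\emptyset$ (from $\mathcal{X}$ open and $\mathcal{X}_r\subseteq\mathcal{X}$ closed), the fact that a continuous path started in $\mathcal{X}\setminus\mathcal{X}_r$ can leave the open set $\mathcal{X}$ only by hitting $\partial\mathcal{X}$ and can enter the closed set $\mathcal{X}_r$ only by hitting $\partial\mathcal{X}_r$, and the freezing of $\widetilde{\bm{X}}_{\bm{x}_0}^{\bm{w}}$ at the value $\bm{X}_{\bm{x}_0}^{\bm{w}}(\tau^{\bm{x}_0}(\bm{w}))\in\partial\mathcal{X}\cup\partial\mathcal{X}_r$ together give both the event identity and, since the frozen value never changes afterwards, the inclusion $\{\bm{w}\mid\widetilde{\bm{X}}_{\bm{x}_0}^{\bm{w}}(T_1)\in\mathcal{X}_r\}\subseteq\{\bm{w}\mid\widetilde{\bm{X}}_{\bm{x}_0}^{\bm{w}}(T_2)\in\mathcal{X}_r\}$ for $T_1\leq T_2$.

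Two refinements. First, the ``null-set caveat'' you anticipate for tangential touching is unnecessary: both the reach-avoid event and the stopped-process event are defined by \emph{hitting} (not crossing) the relevant boundary, so tangential contact is counted identically on both sides, and the set equality holds for every continuous sample path; the only exceptional null set is the one off which sample paths fail to be continuous. Second, your case analysis tacitly assumes $\bm{x}_0\in\mathcal{X}\setminus\mathcal{X}_r$, which is the right reading but should be stated: as written the lemma allows $\bm{x}_0\in\mathcal{X}$, and for $\bm{x}_0$ in the interior of $\mathcal{X}_r$ the identity can actually fail (a path that stays in the interior of $\mathcal{X}_r$ up to time $T$ satisfies $\widetilde{\bm{X}}_{\bm{x}_0}^{\bm{w}}(T)\in\mathcal{X}_r$ without ever touching $\partial\mathcal{X}_r$); consistently, the paper defines the stopped process only for $\bm{x}_0\in\overline{\mathcal{X}\setminus\mathcal{X}_r}$, and all uses of the lemma (Theorem \ref{thm_con}, Corollary \ref{coro3}) take $\bm{x}_0\in\mathcal{X}\setminus\mathcal{X}_r$.
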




\begin{theorem}
\label{thm_con}
    If there exists a barrier function $v\colon [0,T]\times \mathbb{R}^n\rightarrow \mathbb{R}$ that is continuously differentiable over $t$ and twice continuously differentiable over $\bm{x}$, satisfying 
 \begin{equation}
 \label{lower_bound_condition1}
 \small
     \begin{cases}
        \mathcal{L}v(t,\bm{x})\geq \alpha v(t,\bm{x})+\beta, & \forall (\bm{x},t)\in \mathcal{X}\setminus\mathcal{X}_r \times [0,T],\\
        \frac{\partial v(t,\bm{x})}{\partial t}+\alpha  \geq \alpha v(t,\bm{x}), & \forall (\bm{x},t)\in  \partial \mathcal{X}_r \times [0,T],\\
  \frac{\partial v(t,\bm{x})}{\partial t}\geq \alpha v(t,\bm{x})+\beta, &  \forall (\bm{x}, t)\in\partial \mathcal{X}\times [0,T],\\
  v(t,\bm{x})\leq 1, &  \forall (\bm{x},t) \in \overline{\mathcal{X}\setminus \mathcal{X}_r} \times [0,T],
    \end{cases}
\end{equation}   
then, 
$\mathbb{P}_{\bm{x}_0}^{[0,T]}\geq
    \begin{cases}
     \frac{v(0,\bm{x}_0)-1}{\beta T}+1, & \text{if $\alpha=0$}\\
      \frac{e^{\alpha T}(\alpha v(0,\bm{x}_0)+\beta)-(\alpha+\beta)}{(\alpha+\beta)(e^{\alpha T}-1)}, & \text{if  $\alpha>0$}
    \end{cases}$
for $\bm{x}_0\in \mathcal{X}\setminus \mathcal{X}_r$,  where $\alpha$ and $\beta$ are specified scalar values satisfying $\alpha\geq 0$ and $\alpha+\beta>0$.
\end{theorem}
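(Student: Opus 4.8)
The plan is to reduce the reach--avoid probability to a hitting probability of the target boundary for the stopped process, to collapse the four lines of \eqref{lower_bound_condition1} into a single pointwise generator inequality, and then to read off the bound on $p$ from Dynkin's formula — crucially \emph{without} the auxiliary function $w$ that the earlier Propositions~\ref{coro5}--\ref{con_ex2} needed. First I would invoke Lemma~\ref{equiv} to write $\mathbb{P}_{\bm{x}_0}^{[0,T]}=\mathbb{E}[1_{\mathcal{X}_r}(\widetilde{\bm{X}}_{\bm{x}_0}^{\bm{w}}(T))]$. Since $\widetilde{\bm{X}}_{\bm{x}_0}^{\bm{w}}$ is permanently absorbed on hitting $\partial\mathcal{X}\cup\partial\mathcal{X}_r$ and otherwise stays in $\overline{\mathcal{X}\setminus\mathcal{X}_r}$, at time $T$ it lies in the interior $\mathcal{X}\setminus\mathcal{X}_r$, on $\partial\mathcal{X}_r\subseteq\mathcal{X}_r$, or on $\partial\mathcal{X}$, the last being disjoint from $\mathcal{X}_r$ because $\mathcal{X}$ is open with $\mathcal{X}_r\subset\mathcal{X}$. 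Hence $\mathbb{P}_{\bm{x}_0}^{[0,T]}=\mathbb{P}(\widetilde{\bm{X}}_{\bm{x}_0}^{\bm{w}}(T)\in\partial\mathcal{X}_r)=:p$, and it suffices to lower-bound $p$.

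Next I would rewrite the conditions through the stopped-process generator $\widetilde{\mathcal{L}}$, which equals $\mathcal{L}$ on $\mathcal{X}\setminus\mathcal{X}_r$ and collapses to $\partial v/\partial t$ on $\partial\mathcal{X}\cup\partial\mathcal{X}_r$. On the interior and on $\partial\mathcal{X}$, lines one and three give $\widetilde{\mathcal{L}}v\geq \alpha v+\beta$, whereas line two gives $\widetilde{\mathcal{L}}v\geq \alpha(v-1)=\alpha v+\beta-(\alpha+\beta)$ on $\partial\mathcal{X}_r$. These fuse into the single inequality
\[
\widetilde{\mathcal{L}}v(t,\bm{x})\;\geq\; \alpha v(t,\bm{x})+\beta-(\alpha+\beta)\,1_{\partial\mathcal{X}_r}(\bm{x})
\]
on the whole support $\overline{\mathcal{X}\setminus\mathcal{X}_r}$, which is exactly how the relaxed target condition replaces the ``$+1$'' that $w$ used to supply. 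Applying Dynkin's formula (Proposition~\ref{prop:inf_generator}) to the discounted certificate $e^{-\alpha t}v(t,\bm{x})$, whose generator is $e^{-\alpha t}(\widetilde{\mathcal{L}}v-\alpha v)$, would then give
\[
\mathbb{E}\big[e^{-\alpha T}v(T,\widetilde{\bm{X}}_{\bm{x}_0}^{\bm{w}}(T))\big]-v(0,\bm{x}_0)\;\geq\; \beta\int_0^T e^{-\alpha s}\,ds-(\alpha+\beta)\,\mathbb{E}\!\left[\int_0^T e^{-\alpha s}1_{\partial\mathcal{X}_r}(\widetilde{\bm{X}}_{\bm{x}_0}^{\bm{w}}(s))\,ds\right].
\]

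The closing step is to control the two surviving terms. Because absorption at $\partial\mathcal{X}_r$ is permanent, $1_{\partial\mathcal{X}_r}(\widetilde{\bm{X}}(s))=1$ for some $s\le T$ precisely on the event $\{\widetilde{\bm{X}}(T)\in\partial\mathcal{X}_r\}$, on which the inner integral equals $\int_{\tau_r}^T e^{-\alpha s}\,ds\leq \frac{1-e^{-\alpha T}}{\alpha}$ with $\tau_r$ the target-hitting time; hence the last expectation is at most $\tfrac{1-e^{-\alpha T}}{\alpha}\,p$ (this upper bound is what makes the whole right-hand side a valid lower bound, using $\alpha+\beta>0$). Bounding the left side by $v\leq 1$ (line four) to get $\mathbb{E}[e^{-\alpha T}v(T,\widetilde{\bm{X}}(T))]\leq e^{-\alpha T}$ and substituting $\int_0^T e^{-\alpha s}\,ds=\tfrac{1-e^{-\alpha T}}{\alpha}$ turns the display into a linear inequality in $p$; solving it yields exactly $p\geq \frac{e^{\alpha T}(\alpha v(0,\bm{x}_0)+\beta)-(\alpha+\beta)}{(\alpha+\beta)(e^{\alpha T}-1)}$ for $\alpha>0$. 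The case $\alpha=0$ follows from the identical argument with $e^{-\alpha s}\equiv 1$ and $\int_0^T 1\,ds=T$, giving $p\geq 1+\frac{v(0,\bm{x}_0)-1}{\beta T}$ (here $\alpha+\beta>0$ forces $\beta>0$).

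The hard part will be justifying Dynkin's formula and the vanishing in expectation of the stochastic-integral term when $\mathcal{X}$ — and hence the range of the now-unbounded certificate $v$ — is unbounded; this is exactly the regime the old boundedness hypothesis ruled out. I would address it by localization: stop the process additionally at the exit time $\tau_R$ from the ball of radius $R$, apply Dynkin on $[0,T\wedge\tau_R]$ where all quantities are bounded, and then let $R\to\infty$, using the non-explosion and moment bounds furnished by the linear-growth part of Assumption~\ref{assump:sde_conditions} together with dominated/monotone convergence to pass the limit inside the expectations. The remaining items — the generator identity for $e^{-\alpha t}v$, the disjointness of the boundary pieces, and the final algebraic rearrangement into the stated closed form — are routine and I would not belabour them.
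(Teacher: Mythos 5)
Your proposal is correct, and it follows the same overall skeleton as the paper's proof: reduce to the stopped process via Lemma~\ref{equiv}, fuse the four lines of \eqref{lower_bound_condition1} into the single inequality $\widetilde{\mathcal{L}}v(t,\bm{x})\geq \alpha v(t,\bm{x})+\beta-(\alpha+\beta)1_{\partial\mathcal{X}_r}(\bm{x})$ on $\overline{\mathcal{X}\setminus\mathcal{X}_r}$ (identical to the paper's combined inequality), and close with the terminal bound $v\leq 1$. Where you genuinely diverge is in how the resulting integral inequality is solved, and the divergence is to your advantage. The paper negates the certificate ($\overline{v}=-v$), writes the undiscounted Dynkin inequality, and then invokes Gr\"onwall's inequality in integral form, which produces the somewhat unwieldy nested terms $\alpha(\alpha+\beta)\int_0^T\int_0^s \mathbb{E}[1_{\mathcal{X}_r}(\widetilde{\bm{X}}_{\bm{x}_0}^{\bm{w}}(t))]\,dt\, e^{\alpha(T-s)}ds$ that must then be estimated via the monotonicity statement in Lemma~\ref{equiv}. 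You instead apply Dynkin directly to the discounted certificate $e^{-\alpha t}v(t,\bm{x})$, so the exponential factor that Gr\"onwall would generate is built in from the start; your pathwise bound on $\mathbb{E}\big[\int_0^T e^{-\alpha s}1_{\partial\mathcal{X}_r}(\widetilde{\bm{X}}_{\bm{x}_0}^{\bm{w}}(s))\,ds\big]$ via permanence of absorption carries exactly the same content as the paper's use of the monotonicity in Lemma~\ref{equiv}, and your final rearrangement reproduces the stated bounds for both $\alpha>0$ and $\alpha=0$. The second thing your route buys is rigor the paper omits: the paper simply asserts the Dynkin-type integral inequality for the stopped process, without justifying the vanishing of the stochastic-integral term when $\mathcal{X}$ and hence $v$ are unbounded---which is precisely the regime the paper advertises. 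Your localization plan (stop at the exit time $\tau_R$ from a ball, apply Dynkin on $[0,T\wedge\tau_R]$, let $R\to\infty$) fills this gap; the only point to execute carefully is the terminal term, since $v$ is bounded only from above: bound the localized left side by $\mathbb{E}[e^{-\alpha(T\wedge\tau_R)}v(T\wedge\tau_R,\cdot)]\leq \mathbb{E}[e^{-\alpha(T\wedge\tau_R)}]$ \emph{before} passing to the limit, and use bounded convergence to get $\mathbb{E}[e^{-\alpha(T\wedge\tau_R)}]\to e^{-\alpha T}$, while the run-cost integrand is uniformly bounded so dominated convergence handles the right side. With that one refinement spelled out, your argument is complete and, in my view, cleaner than the published one.
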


The proof of Theorem \ref{thm_con} is presented in Appendix. Instead of employing a time-dependent barrier function $v\colon [0,T]\times \mathbb{R}^n \rightarrow \mathbb{R}$, a natural variant of Theorem \ref{thm_con} is to consider a time-independent barrier function $v\colon \mathbb{R}^n \rightarrow \mathbb{R}$.

\begin{corollary}
\label{coro3}
    Suppose there exists a function $v\colon \mathbb{R}^n\rightarrow \mathbb{R}$, which is twice continuously differentiable, satisfying 
      \begin{equation}
    \label{lower_bound_3}
        \begin{cases}
            \mathcal{L}v(\bm{x})\geq \alpha v(\bm{x})+\beta, &\forall \bm{x}\in \mathcal{X}\setminus \mathcal{X}_r, \\
           v(\bm{x})\leq 1, & \forall \bm{x}\in \overline{\mathcal{X}\setminus \mathcal{X}_r},\\
            \alpha v(\bm{x})+\beta \leq 0, & \forall \bm{x}\in \partial \mathcal{X},
        \end{cases}
    \end{equation} 
    where $\alpha$ and $\beta$ are specified scalar values satisfying $\alpha>0$ and $\alpha+\beta>0$, then, $\mathbb{P}_{\bm{x}_0}^{[0,T]}\geq \frac{e^{\alpha T}(\alpha v(\bm{x}_0)+\beta)-(\alpha+\beta)}{(\alpha+\beta)(e^{\alpha T}-1)}$
 for $\bm{x}_0\in \mathcal{X}\setminus \mathcal{X}_r$.    
\end{corollary}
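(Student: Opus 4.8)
The plan is to obtain Corollary \ref{coro3} as a direct specialization of Theorem \ref{thm_con}, by lifting the time-independent barrier function $v(\bm{x})$ to a time-dependent one through the trivial extension $\hat{v}(t,\bm{x}) := v(\bm{x})$ for all $(t,\bm{x}) \in [0,T]\times\mathbb{R}^n$. Since $\hat{v}$ has no $t$-dependence, we get $\frac{\partial \hat{v}(t,\bm{x})}{\partial t}=0$ everywhere, and the infinitesimal generator collapses to its purely spatial part, $\mathcal{L}\hat{v}(t,\bm{x})=\mathcal{L}v(\bm{x})$. The task then reduces to checking that the three hypotheses in \eqref{lower_bound_3} imply the four conditions in \eqref{lower_bound_condition1} for $\hat{v}$; once this is done, the $\alpha>0$ branch of Theorem \ref{thm_con} delivers the claimed bound verbatim, using $\hat{v}(0,\bm{x}_0)=v(\bm{x}_0)$.

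I would verify the four conditions of \eqref{lower_bound_condition1} one at a time. The first condition, $\mathcal{L}\hat{v}\geq \alpha\hat{v}+\beta$ on $\mathcal{X}\setminus\mathcal{X}_r$, is immediate from the first line of \eqref{lower_bound_3}. The third condition, $\partial_t\hat{v}\geq \alpha\hat{v}+\beta$ on $\partial\mathcal{X}$, reads $0\geq \alpha v(\bm{x})+\beta$ and coincides exactly with the third line of \eqref{lower_bound_3}. The fourth condition, $\hat{v}\leq 1$ on $\overline{\mathcal{X}\setminus\mathcal{X}_r}$, is the second line of \eqref{lower_bound_3}. The only condition needing a short argument is the second, $\partial_t\hat{v}+\alpha\geq\alpha\hat{v}$ on $\partial\mathcal{X}_r$, which here becomes $\alpha\geq\alpha v(\bm{x})$; as $\alpha>0$ this is equivalent to $v(\bm{x})\leq 1$ on $\partial\mathcal{X}_r$.

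To close this last point I would establish the set inclusion $\partial\mathcal{X}_r\subseteq\overline{\mathcal{X}\setminus\mathcal{X}_r}$ and then restrict the bound $v\leq 1$ from the closure to $\partial\mathcal{X}_r$. The inclusion follows because $\mathcal{X}_r$ is a closed subset of the \emph{open} set $\mathcal{X}$: every point of $\partial\mathcal{X}_r\subseteq\mathcal{X}_r$ is thus an interior point of $\mathcal{X}$, admitting a neighborhood contained in $\mathcal{X}$; since such a point is a boundary point of $\mathcal{X}_r$, that neighborhood also meets $\mathcal{X}\setminus\mathcal{X}_r$, so the point lies in $\overline{\mathcal{X}\setminus\mathcal{X}_r}$. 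I do not expect any substantive obstacle here—the statement is a genuine corollary, and this inclusion is the only subtlety. It is worth noting, in the spirit of the remark following Proposition \ref{con_ex2}, that because \eqref{lower_bound_3} carries no explicit dependence on $T$, a single admissible $v(\bm{x})$ certifies the lower bound simultaneously for every horizon $T>0$, which is precisely the advantage of the time-independent formulation.
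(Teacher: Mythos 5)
Your proposal is correct and follows exactly the route the paper intends: the paper states Corollary \ref{coro3} as the time-independent specialization of Theorem \ref{thm_con} (with no separate proof given), and your lifting $\hat{v}(t,\bm{x}):=v(\bm{x})$, verification of the four conditions of \eqref{lower_bound_condition1}, and the topological inclusion $\partial\mathcal{X}_r\subseteq\overline{\mathcal{X}\setminus\mathcal{X}_r}$ (valid since $\mathcal{X}_r$ is closed and contained in the open set $\mathcal{X}$) fill in precisely the details that specialization requires. No gaps.
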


The lower bounds in Theorem \ref{thm_con} and Corollary \ref{coro3} are monotonically increasing in $T$ and bounded by 1, provided $v(0,\bm{x}_0) \leq 1$ (or $v(\bm{x}_0) \leq 1$). This aligns with the monotonicity of the reach-avoid probability $\mathbb{P}_{\bm{x}_0}^{[0,T]}$. Notably, as condition \eqref{lower_bound_3} is $T$-independent, the bound in Corollary \ref{coro3} remains valid for all $T > 0$.


\section{Examples}
\label{sec:ex}
This section illustrates the applicability of the proposed safety and reach-avoid conditions through numerical examples. Rather than optimizing performance, we show that removing the bounded-function assumption enables rigorous verification for both discrete- and continuous-time stochastic systems, extending existing methods. All results are obtained using the SDP solver Mosek \citep{aps2019mosek}.

\subsection{Discrete-time Systems}

\begin{table}[h!]
    \centering
    \begin{adjustbox}{width=0.48\textwidth}
    \begin{tabular}{|c|c|c|c|c|c|c|c|c|}\hline

        
     \multicolumn{8}{|c|}{\eqref{safe_upper} with $\alpha=1.1$}\\\hline
   $d$ &2&4&6&8&10&12&14  \\\hline
        $\epsilon_1$&1.0000&0.9995& 0.9675& 0.8340& 0.7847& 0.7610&  0.7183\\\hline 
            \multicolumn{8}{|c|}{\eqref{safe_upper} with $\alpha=1$}\\\hline
   $d$ &2&4&6&8&10&12&14  \\\hline
        $\epsilon_1$&1.0000&0.9999& 0.9991& 0.9138& 0.7012& 0.6992&  0.5837\\\hline 
        
            \multicolumn{8}{|c|}{\eqref{safe_upper} with $\alpha=0.95$}\\\hline
   d &2&4&6&8&10&12&14  \\\hline
        $\epsilon_1$&1.0000&1.0000& 0.9996& 0.9990& 0.8001& 0.6846&  0.5141\\\hline  
    
    \end{tabular}
    \end{adjustbox}
      \caption{\small Upper bounds of the safety probability for Problem \ref{safe} in Example \ref{ex1} ($d$ denotes the degree of $v(\bm{x})$)}
     \label{tab:my_label_ex1}
\end{table}
\begin{table}[h]
    \centering
    \begin{adjustbox}{width=0.45\textwidth}
    \begin{tabular}{|c|c|c|c|c|c|c|}\hline
    \multicolumn{7}{|c|}{\eqref{safe_upper} with $\alpha=1.1$}\\\hline
   $d$ &2&4&6&8&10&12  \\\hline
        $\epsilon_1$&0.9997&0.9986& 0.9983& 0.9886&0.9103&0.8202\\\hline 
    
\multicolumn{7}{|c|}{\eqref{pro_e71} with $\alpha=1.1$, $\beta=0$, and $\tilde{\mathcal{X}}=\{(x,y)^{\top}\mid x^2+y^2\leq 30\}$}\\\hline
   $d$ &2&4&6&8&10&12  \\\hline
        $\epsilon_1$&1.0000&1.0000& 1.0000&1.0000 & 0.9790&0.9445\\\hline
        
    \end{tabular}
 \end{adjustbox}
      \caption{\small Upper bounds of the safety probability for Problem \ref{safe} in Example \ref{ex2} ($d$ denotes the degree of $v(\bm{x})$)}
     \label{tab:my_label_ex2}
\end{table}

\begin{example}
\label{ex1}
We consider the one-dimensional system
\begin{equation*}
x(l+1)=d(l)+x(l),
\end{equation*}
where $d\colon\mathbb{N}\rightarrow [-1,1]$, the initial state is $\bm{x}_0=-0.9$, the safe set is $\mathcal{X}=\{x\mid x\leq 0\}$, and the horizon is $N=10$. The disturbance distribution is uniform on $[-1,1]$. 

Since $\widetilde{\mathcal{X}}$ is unbounded, non-constant polynomials cannot satisfy the boundedness required by \eqref{pro_e71}. Conversely, our condition \eqref{safe_upper} remains applicable, yielding a tight upper bound of 0.5141 (degree 14, $\alpha=0.95$) compared to the Monte Carlo estimate of 0.4794 (Table~\ref{tab:my_label_ex1}).


\end{example}
\begin{table}[h]
    \centering
    \begin{adjustbox}{width=0.48\textwidth}
    \begin{tabular}{|c|c|c|c|c|c|c|c|}\hline
    \multicolumn{8}{|c|}{\eqref{lower_bound_condition1} with $\alpha=0.1$ and $\beta=0$}\\\hline
  $(d_x,d_t)$ &(8,1)&(10,1)&(12,1)&(14,1)&(16,1)&(18,1)&(20,1)  \\\hline
        $\epsilon_1$& -&-&0.7297& 0.7646&0.7873 & 0.7921&0.8033\\\hline

\multicolumn{8}{|c|}{\eqref{lower_bound_condition1} with $\alpha=0.1$ and $\beta=-0.09$}\\\hline
  $(d_x,d_t)$ &(8,1)&(10,1)&(12,1)&(14,1)&(16,1)&(18,1)&(20,1)  \\\hline
        $\epsilon_1$& 0.3129&0.6163&0.7807& 0.8244&0.8380 & 0.8472&0.8583\\\hline

        \multicolumn{8}{|c|}{\eqref{lower_bound_condition1} with $\alpha=0.1$ and $\beta=-0.098$}\\\hline
  $(d_x,d_t)$ &(8,1)&(10,1)&(12,1)&(14,1)&(16,1)&(18,1)&(20,1)  \\\hline
        $\epsilon_1$& 0.5355&0.7201&0.8145& 0.8423&0.8774
 & 0.9736&0.9950\\\hline
        
    \multicolumn{8}{|c|}{\eqref{lower_bound_3} with $\alpha=0.1$  and $\beta=0$}\\\hline
   $(d_x,d_t)$ &(8,0)&(10,0)&(12,0)&(14,0)&(16,0)&(18,0)&(20,0)  \\\hline
        $\epsilon_1$&0.0000&-&0.6303&0.7572& 0.7690& 0.7846& 0.7962\\\hline 
    
\multicolumn{8}{|c|}{\eqref{lower_bound_3} with $\alpha=0.1$ and $\beta=-0.09$}\\\hline
  $(d_x,d_t)$ &(8,0)&(10,0)&(12,0)&(14,0)&(16,0)&(18,0)&(20,0)  \\\hline
        $\epsilon_1$& 0.1247&0.5165&0.6786& 0.7937&0.8249 & 0.8355&0.8407\\\hline

        \multicolumn{8}{|c|}{\eqref{lower_bound_3} with $\alpha=0.1$ and $\beta=-0.098$}\\\hline
  $(d_x,d_t)$ &(8,0)&(10,0)&(12,0)&(14,0)&(16,0)&(18,0)&(20,0)  \\\hline
        $\epsilon_1$& 0.2389&0.59075&0.7555& 0.8219&0.8396 & 0.8465&0.8503\\\hline

    \end{tabular}
 \end{adjustbox}
      \caption{\small Lower bounds of the reach-avoid probability $\mathbb{P}_{\bm{x}_0}^{[0,T]}$ in Example \ref{ex3} with $\mathcal{X}=\{x\mid x<1\}$ ($d_x$ and $d_t$ are the degrees of $v(t,\bm{x})$ in $\bm{x}$ and $t$, respectively.)}
     \label{tab:my_label_ex3}
\end{table}

 \begin{table}[h]
    \centering
    \begin{adjustbox}{width=0.48\textwidth}
    \begin{tabular}{|c|c|c|c|c|c|c|c|}\hline

    \multicolumn{8}{|c|}{\eqref{lower_bound_condition1} with $\alpha=0.1$ and $\beta=0$}\\\hline
  $(d_x,d_t)$ &(8,1)&(10,1)&(12,1)&(14,1)&(16,1)&(18,1)&(20,1)  \\\hline
        $\epsilon_1$& 0.2235&0.4729&0.6713& 0.7620&0.7893 & 0.8010&0.8096\\\hline

\multicolumn{8}{|c|}{\eqref{lower_bound_condition1} with $\alpha=0.1$ and $\beta=-0.09$}\\\hline
  $(d_x,d_t)$ &(8,1)&(10,1)&(12,1)&(14,1)&(16,1)&(18,1)&(20,1)  \\\hline
        $\epsilon_1$& 0.4194&0.6520&0.7661& 0.8217&0.8397 & 0.8606&0.8672\\\hline

        \multicolumn{8}{|c|}{\eqref{lower_bound_condition1} with $\alpha=0.1$ and $\beta=-0.098$}\\\hline
  $(d_x,d_t)$ &(8,1)&(10,1)&(12,1)&(14,1)&(16,1)&(18,1)&(20,1)  \\\hline
        $\epsilon_1$& 0.9824&0.9868&0.9899& 0.9865&0.9833 & 0.9804&0.9881\\\hline
        
    \multicolumn{8}{|c|}{\eqref{lower_bound_3} with $\alpha=0.1$ and $\beta=0$}\\\hline
   $(d_x,d_t)$ &(8,0)&(10,0)&(12,0)&(14,0)&(16,0)&(18,0)&(20,0)  \\\hline
        $\epsilon_1$&0.1222&-&0.6540&0.7555& 0.7767& 0.7942& 0.8052\\\hline 
    
\multicolumn{8}{|c|}{\eqref{lower_bound_3} with $\alpha=0.1$ and $\beta=-0.09$}\\\hline
  $(d_x,d_t)$ &(8,0)&(10,0)&(12,0)&(14,0)&(16,0)&(18,1)&(20,0)  \\\hline
        $\epsilon_1$& 0.3599&0.6111&0.7320& 0.8020&0.8358 & 0.8459&0.8616\\\hline

        \multicolumn{8}{|c|}{\eqref{lower_bound_3} with $\alpha=0.1$ and $\beta=-0.098$}\\\hline
  $(d_x,d_t)$ &(8,0)&(10,0)&(12,0)&(14,0)&(16,0)&(18,1)&(20,0)  \\\hline
        $\epsilon_1$& 0.4809&0.7079&0.8111& 0.8473&0.8551 & 0.8814&0.8815\\\hline

  \end{tabular}
 \end{adjustbox}
      \caption{\small Lower  bounds of the reach-avoid probability $\mathbb{P}_{\bm{x}_0}^{[0,T]}$ in Example \ref{ex3} with $\mathcal{X}=\{x\mid x^2<100\}$ ($d_x$ and $d_t$ are the degrees of $v(t,\bm{x})$ in $\bm{x}$ and $t$, respectively.)}
     \label{tab:my_label_ex31}
\end{table}
\begin{example}
\label{ex2}
We study the Lotka--Volterra population model:
\begin{equation}
\label{volterra}
    \begin{cases}
x(l+1)=rx(l)-ay(l)x(l),\\
y(l+1)=sy(l)+acy(l)x(l),
\end{cases}
\end{equation}
with $r=0.5$, $a=1$, $s=-0.5+d(l)$, $d\colon\mathbb{N}\rightarrow [-1,1]$, $c=1$, and the safe set $\mathcal{X}=\{\bm{x}\mid x^2+y^2\leq 4\}$. The initial state is $\bm{x}_0=(-0.8,-0.6)^{\top}$ and the horizon is $N=50$. 

Table~\ref{tab:my_label_ex2} compares the results obtained from \eqref{safe_upper} and \eqref{pro_e71}, both of which successfully certify nontrivial probability bounds. In \eqref{pro_e71}, the value of $\beta$ is specified a priori, because the resulting upper bound  when $\alpha>1$ is nonlinear in $\beta$. A Monte Carlo simulation with $10^6$ trajectories provides a reference probability of $0.8141$. It is observed that condition \eqref{safe_upper} with $\alpha=1.1$ and polynomial degree $12$ yields an upper bound of $0.8202$, which is close to the Monte Carlo estimate of $0.8141$.
\end{example}
\subsection{Continuous-time Systems}
\begin{example}
\label{ex3}
    Consider the one-dimensional system:
 \[
dX(t,w)=b(X(t,w))\,dt+\sigma(X(t,w))\,dW(t,w), 
\]
with $b(x)=-x$ and $\sigma(x)=\tfrac{\sqrt{2}}{2}x$. This diffusion model is often used for population dynamics under random environmental fluctuations. We set the initial state to $\bm{x}_0=-0.8$, the horizon to $T=100$, and the target set $\mathcal{X}_r=\{x\in\mathbb{R}\mid 100x^2-1\leq 0\}$. Two safe sets are considered: $\mathcal{X}=\{x\mid x<1\}$ \text{and} $\mathcal{X}=\{x\mid x^2<100\}$.

Conditions \eqref{lower_bound_condition1} and \eqref{lower_bound_3} yield computable lower bounds (Tables~\ref{tab:my_label_ex3}--\ref{tab:my_label_ex31}) consistent with the Monte Carlo estimate based on $10^6$ trajectories, which is $1$. Unlike existing bounded-function approaches (i.e., \eqref{upper_bound_2} and \eqref{upper_bound_3}) that fail on unbounded $\widetilde{\mathcal{X}}$, our framework remains valid and produces tight lower bounds.

\end{example}



       



The case studies highlight four aspects of the proposed method: (1) \textbf{Generality}: removing the bounded-function restriction enables analysis of systems and safe sets beyond existing approaches (e.g., Examples~\ref{ex1}, \ref{ex3}); (2) \textbf{Soundness}: all certificates yield rigorous probability bounds consistent with Monte Carlo estimates; (3) \textbf{Efficiency}:  higher polynomial degrees improve tightness at increased computational cost; (4) \textbf{Parameter sensitivity}:  the tightness depends on $\alpha$ in \eqref{safe_upper} and $(\alpha,\beta)$ in \eqref{lower_bound_condition1}–\eqref{lower_bound_3}, motivating future work on automatic tuning.


\section{Conclusion}
\label{sec:con}
This paper developed barrier-like conditions for upper-bounding finite-time safety probabilities in discrete-time systems and lower-bounding reach-avoid probabilities in continuous-time systems. By removing the bounded-function requirement, the approach extends existing state-of-the-art methods and applies to a broader class of systems. Numerical examples demonstrate its effectiveness.

\textbf{Acknowledgments.} This work was partially supported by the National Research Foundation, Singapore, under its RSS Scheme (NRF-RSS2022-009) and the CAS Pioneer Hundred Talents Program.

\bibliography{ifacconf}             
\section*{Appendix}
\label{sec:app}

\subsection{Proofs of Theorem~2 and Proposition~5}

\begin{proposition}
     Suppose there exist a barrier function $v(t,\bm{x})\colon [0,T]\times \mathbb{R}^n\rightarrow \mathbb{R}$ and a function $w(t)\colon [0,T] \rightarrow \mathbb{R}$ with $\sup_{t\in [0,T]}|w(t)|\leq M$ that are continuously differentiable over $t$ and twice continuously differentiable over $\bm{x}$, and $\beta\in \mathbb{R}$ satisfying \eqref{upper_bound_2_wt}, where $\alpha$ is a user-defined scalar value, then, 
     \begin{equation*}
     \begin{cases}
\frac{(\frac{1}{\alpha}v(0,\bm{x}_0)+\frac{\beta}{\alpha^2})(e^{\alpha T}-1)-\frac{\beta}{\alpha}T}{T} -\frac{2M}{T}\leq 0, & \text{if $\alpha\neq 0$}, \\
v(0,\bm{x}_0)+\frac{1}{2}\beta T-\frac{2M}{T} \leq 0, & \text{if $\alpha=0$}
\end{cases}
\end{equation*}
for $\bm{x}_0\in \mathcal{X}\setminus \mathcal{X}_r$. 
     
\end{proposition}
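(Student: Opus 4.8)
\emph{The plan is to} replay the proof of Theorem~2 in \cite{xue2023new} (which underlies Proposition~\ref{coro5}) almost verbatim, but with the auxiliary function specialized to $w=w(t)$, and to observe that the single term carrying the reach-avoid probability is destroyed, so that the inequality chain collapses to precisely the asserted sign condition on the bound. Throughout I work with the stopped process $\widetilde{\bm{X}}_{\bm{x}_0}^{\bm{w}}$ and its generator $\widetilde{\mathcal{L}}$, which acts as $\mathcal{L}$ on $\mathcal{X}\setminus\mathcal{X}_r$ and as $\partial_t$ on $\partial\mathcal{X}\cup\partial\mathcal{X}_r$.

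\textbf{Step 1 (lower bound on the time-average of $v$).} The first two lines of \eqref{upper_bound_2_wt} combine into $\widetilde{\mathcal{L}}v(t,\bm{x})\geq \alpha v(t,\bm{x})+\beta$ for all $\bm{x}\in\overline{\mathcal{X}}$ and $t\in[0,T]$. Applying Dynkin's formula (Proposition~\ref{prop:inf_generator}) to $\widetilde{\bm{X}}_{\bm{x}_0}^{\bm{w}}$ and then the Gr\"onwall estimate exactly as in the proof of Theorem~\ref{thm_con} (the step that requires $\alpha\geq 0$) yields, for $\alpha\neq 0$, $\mathbb{E}[v(t,\widetilde{\bm{X}}_{\bm{x}_0}^{\bm{w}}(t))]\geq v(0,\bm{x}_0)e^{\alpha t}+\tfrac{\beta}{\alpha}(e^{\alpha t}-1)$. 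Integrating over $t\in[0,T]$ reproduces precisely the numerator appearing in the claim, i.e. \[\int_0^T \mathbb{E}[v(t,\widetilde{\bm{X}}_{\bm{x}_0}^{\bm{w}}(t))]\,dt\;\geq\;\Big(\tfrac{1}{\alpha}v(0,\bm{x}_0)+\tfrac{\beta}{\alpha^2}\Big)(e^{\alpha T}-1)-\tfrac{\beta}{\alpha}T.\]

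\textbf{Step 2 (upper bound, where the probability term vanishes).} Because $w$ depends on $t$ only, $\widetilde{\mathcal{L}}w(t,\bm{x})=w'(t)$ on all of $\overline{\mathcal{X}}$, and the third line of \eqref{upper_bound_2_wt} reads $v(t,\bm{x})\leq w'(t)=\widetilde{\mathcal{L}}w(t,\bm{x})$ on the \emph{whole} of $\overline{\mathcal{X}\setminus\mathcal{X}_r}$ — in particular on $\partial\mathcal{X}_r$, where continuity of $v$ in $\bm{x}$ has already absorbed the ``$+1$'' present in the corresponding line of \eqref{upper_bound_2}. Hence no $1_{\partial\mathcal{X}_r}$ contribution survives: taking expectations along $\widetilde{\bm{X}}_{\bm{x}_0}^{\bm{w}}$, integrating, and using that $w$ is state-independent (or Dynkin's formula applied to $w$) gives \[\int_0^T \mathbb{E}[v(t,\widetilde{\bm{X}}_{\bm{x}_0}^{\bm{w}}(t))]\,dt\;\leq\;\int_0^T \mathbb{E}[\widetilde{\mathcal{L}}w(t,\widetilde{\bm{X}}_{\bm{x}_0}^{\bm{w}}(t))]\,dt\;=\;w(T)-w(0)\;\leq\;2M,\] the last step using $\sup_{t\in[0,T]}|w(t)|\leq M$. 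This is the decisive difference from Proposition~\ref{coro5}: there the $+1$ on $\partial\mathcal{X}_r$ is retained and contributes the term $+T\,\mathbb{P}_{\bm{x}_0}^{[0,T]}$ that makes the lower bound informative.

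\textbf{Step 3 (conclusion) and the main obstacle.} Chaining Steps~1 and 2 gives $\big(\tfrac{1}{\alpha}v(0,\bm{x}_0)+\tfrac{\beta}{\alpha^2}\big)(e^{\alpha T}-1)-\tfrac{\beta}{\alpha}T\leq 2M$; dividing by $T>0$ and rearranging is exactly the first asserted inequality. For $\alpha=0$ the Gr\"onwall step degenerates to the affine estimate $\mathbb{E}[v(t,\widetilde{\bm{X}}_{\bm{x}_0}^{\bm{w}}(t))]\geq v(0,\bm{x}_0)+\beta t$, whose integral $v(0,\bm{x}_0)T+\tfrac{1}{2}\beta T^2$ is again $\leq 2M$, giving $v(0,\bm{x}_0)+\tfrac12\beta T-\tfrac{2M}{T}\leq 0$. \emph{The main obstacle is conceptual rather than computational}: one must recognize that collapsing $w(t,\bm{x})$ to $w(t)$ erases the $+1$ on $\partial\mathcal{X}_r$, and that this single term is exactly what distinguishes an informative lower bound from a vacuous one. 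Secondary care is needed in justifying the stopped-process generator $\widetilde{\mathcal{L}}$ (its vanishing on $\partial\mathcal{X}\cup\partial\mathcal{X}_r$ and the validity of Dynkin's formula under Assumption~\ref{assump:sde_conditions}) and in the direction of the Gr\"onwall estimate, which relies on $\alpha\geq 0$.
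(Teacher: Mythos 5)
Your proof is correct and follows essentially the same route as the paper's own argument: use the generator inequality for the stopped process together with Dynkin's formula and the exponential comparison to get $\mathbb{E}[v(t,\widetilde{\bm{X}}_{\bm{x}_0}^{\bm{w}}(t))]\geq e^{\alpha t}v(0,\bm{x}_0)+\tfrac{\beta}{\alpha}(e^{\alpha t}-1)$, then upper-bound the time integral of $\mathbb{E}[v]$ by $w(T)-w(0)\leq 2M$ via $v\leq w'$ on $\overline{\mathcal{X}\setminus\mathcal{X}_r}$, and chain the two bounds (identically for $\alpha=0$). One minor remark: the comparison estimate follows from the differential form of the inequality (multiply by $e^{-\alpha t}$ and integrate) for \emph{any} $\alpha\neq 0$, so the restriction $\alpha\geq 0$ you inherit from the integral-form Gr\"onwall argument of Theorem~\ref{thm_con} is not actually needed, which matters since the proposition allows an arbitrary user-defined $\alpha$.
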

\begin{pf}
From \eqref{upper_bound_2_wt}, we have
 \begin{equation*}
        \begin{cases}
            \widetilde{\mathcal{L}}v(t,\bm{x})\geq \alpha v(t,\bm{x})+\beta, & \forall \bm{x}\in \overline{\mathcal{X}\setminus \mathcal{X}_r},\forall t \in [0,T],\\
            v(t,\bm{x})\leq \frac{\partial w(t)}{\partial t},&  \forall \bm{x}\in \overline{\mathcal{X}\setminus \mathcal{X}_r}, \forall t \in [0,T],
        \end{cases}
    \end{equation*}
    where $\widetilde{\mathcal{L}}v(t,\bm{x})$ is defined in \eqref{stop_l}.

When $\alpha\neq 0$, according to $\widetilde{\mathcal{L}}v(t,\bm{x})\geq \alpha v(t,\bm{x})+\beta, \forall \bm{x}\in \overline{\mathcal{X}\setminus \mathcal{X}_r},\forall t \in [0,T]$, we have, for $t\in [0,T]$, that 
\begin{equation*}
\begin{split}
    \mathbb{E}[v(t,\widetilde{\bm{X}}_{\bm{x}_0}^{\bm{w}}(t))]\geq e^{\alpha t}v(0,\bm{x}_0)+\frac{\beta}{\alpha}(e^{\alpha t}-1).
    \end{split}
    \end{equation*}

Further, from  $v(t,\bm{x})\leq \frac{\partial w(t)}{\partial t}, \forall t \in [0,T], \forall \bm{x}\in \overline{\mathcal{X}\setminus \mathcal{X}_r}$, we have 
\begin{equation*}
\begin{split}
&0\geq \frac{\int_{0}^T \mathbb{E}[v(t,\widetilde{\bm{X}}_{\bm{x}_0}^{\bm{w}}(t))]dt}{T}-\frac{\mathbb{E}[w(T)]-w(0)}{T}\\
&\geq \frac{\int_{0}^T e^{\alpha t}v(0,\bm{x}_0)+\frac{\beta}{\alpha}(e^{\alpha t}-1) dt}{T}- \frac{\mathbb{E}[w(T)]-w(0)}{T}\\
&\geq \frac{(\frac{1}{\alpha}e^{\alpha t}v(0,\bm{x}_0)+\frac{\beta}{\alpha^2}e^{\alpha t}-\frac{\beta}{\alpha}t)\mid_{0}^T}{T}-\frac{2M}{T}\\
&= \frac{(\frac{1}{\alpha}v(0,\bm{x}_0)+\frac{\beta}{\alpha^2})(e^{\alpha T}-1)-\frac{\beta}{\alpha}T}{T} -\frac{2M}{T}.
  \end{split}
\end{equation*}

The conclusion for $\alpha=0$ can be obtained by following the above procedure. The proof is completed. 
\end{pf}

\textbf{The proof of Theorem \ref{thm_con}:}

\begin{pf}
    According to \eqref{lower_bound_condition1}, we have 
    \begin{equation*}
\widetilde{\mathcal{L}}v(t,\bm{x})+(\alpha+\beta)1_{\partial \mathcal{X}_r}(\bm{x})\geq  \alpha v(t,\bm{x})+\beta, \forall \bm{x}\in \overline{\mathcal{X}}, \forall t\in [0,T],
\end{equation*} 
where \[\widetilde{\mathcal{L}}v(t,\bm{x})=\begin{cases}
    \mathcal{L}v(t,\bm{x}), & \text{~if~} \bm{x}\in  \mathcal{X}\setminus \mathcal{X}_r,  t\in [0,T],\\
    \frac{\partial v(t,\bm{x})}{\partial t},&\text{~if~} \bm{x}\in \partial \mathcal{X} \cup \partial \mathcal{X}_r, t\in [0,T].
\end{cases}
\]

Consequently, 
\[
\begin{split}
\mathbb{E}&[v(T,\widetilde{\bm{X}}_{\bm{x}_0}^{\bm{w}}(T))]\geq \int_{0}^T \alpha \mathbb{E}[v(t,\widetilde{\bm{X}}_{\bm{x}_0}^{\bm{w}}(t))]dt+v(0,\bm{x}_0)\\
&~~+\int_{0}^T \beta dt-\int_{0}^T (\alpha+\beta)\mathbb{E}[1_{\mathcal{X}_r}(\widetilde{\bm{X}}_{\bm{x}_0}^{\bm{w}}(t))]dt, \forall \bm{x}_0\in \mathcal{X}.
\end{split}
\]
Taking $\overline{v}(t,\bm{x})=-v(t,\bm{x})$ over $\bm{x}\in \overline{\mathcal{X}}$ and $t\in [0,T]$, we have 
\[
\begin{split}
\mathbb{E}&[\overline{v}(T,\widetilde{\bm{X}}_{\bm{x}_0}^{\bm{w}}(T))]\leq \int_{0}^T \alpha \mathbb{E}[\overline{v}(t,\widetilde{\bm{X}}_{\bm{x}_0}^{\bm{w}}(t))]dt+\overline{v}(0,\bm{x}_0)\\
&~~-\int_{0}^T \beta dt+\int_{0}^T (\alpha+\beta)\mathbb{E}[1_{\mathcal{X}_r}(\widetilde{\bm{X}}_{\bm{x}_0}^{\bm{w}}(t))]dt.
\end{split}
\]

According to Gr\"onwall inequality in the integral form, if $\alpha>0$, we further have 
\[
\begin{split}
&\mathbb{E}[\overline{v}(T,\widetilde{\bm{X}}_{\bm{x}_0}^{\bm{w}}(T))]\leq \nu(T)+\int_{0}^T \nu(s) \alpha  e^{\alpha (T-s)}ds\\
&= \overline{v}(0,\bm{x}_0)+\int_{0}^T \overline{v}(0,\bm{x}_0)\alpha e^{\alpha(T-s)} ds -\beta T\\
&-\int_{0}^{T} \beta s\alpha e^{\alpha(T-s)}ds+ (\alpha+\beta)\int_{0}^T \mathbb{E}[1_{\mathcal{X}_r}(\widetilde{\bm{X}}_{\bm{x}_0}^{\bm{w}}(s))] ds \\
&~~~~~~~~~~+\alpha (\alpha+\beta)\int_{0}^T \int_{0}^s \mathbb{E}[1_{\mathcal{X}_r}(\widetilde{\bm{X}}_{\bm{x}_0}^{\bm{w}}(t))] dt e^{\alpha (T-s)} ds    \\
&\leq \overline{v}(0,\bm{x}_0)e^{\alpha T}+\frac{\beta}{\alpha}-\frac{\beta}{\alpha}e^{\alpha T} \\
&~~~~~~~~~~~~~+(\alpha+\beta)e^{\alpha T}\mathbb{P}(\widetilde{\bm{X}}_{\bm{x}_0}^{\bm{w}}(T) \in \mathcal{X}_r)(-\frac{1}{\alpha}e^{-\alpha T}+\frac{1}{\alpha}) 
\end{split}
\]
where $\nu(s)=\overline{v}(0,\bm{x}_0)+\int_{0}^s (\alpha+\beta)\mathbb{E}[1_{ \mathcal{X}_r}(\widetilde{\bm{X}}_{\bm{x}_0}^{\bm{w}}(t))]dt-\int_{0}^s \beta dt$. The last inequality is obtained from Lemma \ref{equiv}.

Thus, 
\[
\begin{split}
&-1\leq \mathbb{E}[\overline{v}(T,\widetilde{\bm{X}}_{\bm{x}_0}^{\bm{w}}(T))]\leq \overline{v}(0,\bm{x}_0)e^{\alpha T}+\frac{\beta}{\alpha}-\frac{\beta}{\alpha}e^{\alpha T} \\
&~~~~~~~~~~~~~+(\alpha+\beta)e^{\alpha T}\mathbb{P}(\widetilde{\bm{X}}_{\bm{x}_0}^{\bm{w}}(T) \in  \mathcal{X}_r)(-\frac{1}{\alpha}e^{-\alpha T}+\frac{1}{\alpha}) 
\end{split}
\]
After rearrangement, we conclude that $\mathbb{P}(\widetilde{\bm{X}}_{\bm{x}_0}^{\bm{w}}(T)\in \mathcal{X}_r)\geq \frac{e^{\alpha T}(v(0,\bm{x}_0)+\frac{\beta}{\alpha})-\frac{\beta}{\alpha}-1}{(1+\frac{\beta}{\alpha})(e^{\alpha T}-1)}$ if $\alpha>0$. Furthermore, according to Lemma \ref{equiv}, $\mathbb{P}_{\bm{x}_0}^{[0,T]}\geq \frac{e^{\alpha T}(v(0,\bm{x}_0)+\frac{\beta}{\alpha})-\frac{\beta}{\alpha}-1}{(1+\frac{\beta}{\alpha})(e^{\alpha T}-1)}$  if $\alpha>0$.

When $\alpha=0$, we can obtain $\frac{v(0,\bm{x}_0)-1}{\beta T}+1$ via following the above arguments. \qed
\end{pf} 

\subsection{Clarification on Stopped-Process Generator and Boundary Regularity}
\begin{remark}
\begin{enumerate}
    \item \textbf{Stopped-process generator:} For any point $\bm{x} \in \partial \mathcal{X} \cup \partial \mathcal{X}_r$, the auxiliary process is absorbed by construction, i.e., $\widetilde{\bm{X}}_{\bm{x}}^{\bm{w}}(t) \equiv \bm{x}$ for $t \ge 0$. Hence, for any function $v(t,\bm{x})$ that is continuously differentiable in $t$ and twice continuously differentiable in $\bm{x}$,
    \[
    \widetilde{\mathcal{L}}v(t,\bm{x})
    =
    \lim_{\Delta t \to 0^+}
    \frac{v(t+\Delta t,\bm{x})-v(t,\bm{x})}{\Delta t}
    =
    \frac{\partial v}{\partial t}(t,\bm{x}).
    \]
    Thus, after stopping, the spatial derivative terms vanish, and the boundary behavior is purely temporal. This pointwise identity does not depend on geometric regularity of the boundary. We note that the statement $\widetilde{\mathcal{L}}v=0$ applies only to time-independent functions $v$; Theorem~2 uses the time-dependent form above.

    \item \textbf{Dynkin’s formula:} Under Assumption~1, the SDE admits a unique strong solution with continuous sample paths. Dynkin’s formula is applied in stopped form up to $\tau$, and its validity requires only the usual integrability condition
    \[
    \mathbb{E}\!\left[\int_0^T |\widetilde{\mathcal{L}}v(t,\widetilde{\bm{X}}_{\bm{x}_0}^{\bm{w}}(t))|\,dt\right] < \infty.
    \]
    In our setting, $v$ is chosen as a polynomial; together with the linear growth assumption on the SDE coefficients, this implies that $\widetilde{\mathcal{L}}v$ has at most polynomial growth and the above expectation is finite on any finite horizon $[0,T]$.

    \item \textbf{Boundary regularity:} Since sample paths are almost surely continuous, the process reaches $\partial \mathcal{X} \cup \partial \mathcal{X}_r$ at the first hitting time $\tau$. Therefore, no additional boundary regularity assumptions are required for the stopping time to be well defined or for the stopped-process generator characterization to hold.
\end{enumerate}
\end{remark}

\subsection{Clarification on SOS Enforcement and Domain Assumptions}

\begin{remark}
    Although $\partial\mathcal{X}$ and $\partial\mathcal{X}_r$ are equality-defined sets, the associated boundary constraints can be handled naturally in Sum-of-Squares (SOS) optimization using the S-procedure or Putinar’s Positivstellensatz. For example, a condition
\[
v(t,x)\ge c \quad \text{on } \{x\mid g(x)=0\}
\]
can be enforced by requiring
\[
v(t,x)-c+\lambda(x)g(x)
\]
to be SOS for some polynomial multiplier $\lambda(x)$. This is standard in SOS-based verification and allows equality constraints on boundaries to be enforced directly through polynomial feasibility conditions.

Regarding the condition $v\le 1$ on $\mathcal{X}\setminus\mathcal{X}_r$, this is only a one-sided normalization condition needed for the probabilistic interpretation of the certificate. It is fundamentally different from the boundedness assumptions in prior work, which require auxiliary functions themselves to remain bounded over the domain. In contrast, $v\le 1$ is compatible with unbounded domains and polynomial templates; for instance, one may choose $v(x)=1-p(x)$ with $p(x)\ge 0$. Similar observations apply to the condition
\[
\mathcal{L}v(t,x)\ge \alpha v(t,x)+\beta,
\]
which imposes only a differential inequality rather than boundedness of $v$.
\end{remark}

\begin{remark}

For numerical enforcement in the SOS/SDP setting for condition \eqref{safe_upper}, constraints over regions such as $\mathbb{R}^n \setminus \mathcal{X}$ are handled via polynomial multiplier relaxations (S-procedure). In particular, a constraint of the form $v(x)\le 1$ on a semialgebraic set defined by $p(x)\ge 0$ is enforced by requiring $1-v(x) - \lambda(x) p(x)$ to be SOS for some polynomial multiplier $\lambda(x) \ge 0$. This provides a tractable sufficient condition for verification without introducing artificial compact subsets.

Importantly, the theoretical guarantee in Theorem 1 does not rely on any bounded outer approximation of $\mathbb{R}^n \setminus \mathcal{X}$. The auxiliary set $\tilde{\mathcal{X}}$ in Table 2 is used only for the baseline method, which requires boundedness for its original formulation, whereas our proposed condition is directly applicable on the unbounded domain.

We acknowledge that domain size and geometry can influence the numerical conservatism (i.e., the polynomial degree required to find a feasible solution). However, this is a computational trade-off regarding the tightness of the bound, rather than a limitation of the theoretical guarantee. Once a certificate is found, the resulting probabilistic bound is mathematically valid for the full unbounded domain.
\end{remark}

                                                   








\end{document}